\newtheorem{thm}{Theorem}
\newtheorem{lem}[thm]{Lemma}
\newtheorem{prop}[thm]{Proposition}
\newtheorem{cor}[thm]{Corollary}
\newtheorem{conj}[thm]{Conjecture}
\newtheorem{defini}[thm]{Definition}
\newenvironment{defi}{\begin{defini}\rm}{\end{defini}}
\newtheorem{ex}[thm]{Example}
\newtheorem{rema}[thm]{Remark}
\newenvironment{rem}{\begin{rema}\rm}{\end{rema}}
\newcommand{\field}[1]{\mathbb{#1}}
\newcommand{\F}{\field{F}}
\begin{document}
\author{Anna-Lena Horlemann-Trautmann\footnote{Faculty of Mathematics and Statistics, University of St.\ Gallen, Switzerland} \and Alessandro Neri\footnote{Institute of Mathematics, University of Zurich, Switzerland} }
%\address{University of St. Gallen}
\title{A Complete Classification of Partial-MDS (Maximally Recoverable) Codes with One Global Parity}
\maketitle

\section{Introduction}

\emph{Partial-MDS (PMDS) codes} are a family of locally repairable codes, mainly used for distributed storage. They are defined to be able to correct any pattern of $s$ additional erasures, after a given number of erasures per locality group have occurred. This makes them also \emph{maximally recoverable (MR) codes}, another class of locally repairable codes. Both terms will be properly defined in the next section.

It is known that MR codes in general, and PMDS codes in particular, exist for any set of parameters, if the field size is large enough \cite{ch07}. Moreover, some explicit constructions of PMDS codes are known, mostly (but not always) with a strong restriction on the number of erasures that can be corrected per locality group \cite{bl13,bl14,bl16,ch15,go14}. In this paper we generalize the notion of PMDS codes to allow locality groups of different sizes. 
We give a general construction of such PMDS codes with $s=1$ global parity, i.e., one additional erasure can be corrected. This construction can be seen as a generalization of the code construction from \cite{ch15}. Furthermore, we show that all PMDS codes for the given parameters are of this form, i.e., we give a classification of these codes. This implies a necessary and sufficient condition on the underlying field size for the existence of these codes (assuming that the MDS conjecture is true). For some parameter sets our generalized construction gives rise to PMDS codes with a smaller field size than any other known construction. 

The paper is structured as follows. The following section gives some preliminaries, among others the definition of maximally recoverable and PMDS codes. We show that the generalized definition of PMDS codes fulfills the MR property. Furthermore, we show when PMDS codes are MDS codes. Moreover, related results are listed. In Section \ref{sec:const} we give a general construction for PMDS codes with one global parity, by giving a generator matrix in systematic form. In Section \ref{sec:class} we give the counterpart of this result, showing that all PMDS codes with one global parity have a generator matrix of basically this form. This gives rise to a necessary and sufficient condition on the underlying field size for the existence of such codes. Section \ref{sec:decoding} deals with decoding PMDS codes with one global parity. A general decoding method and its complexity order is given, and some improvements for special types of PMDS codes are mentioned. We conclude the main work in Section \ref{sec:conclusions}. In the appendix we investigate generalizations to PMDS with more than one global parity and %some results about the generator matrices in systematic form of such codes are given. 
derive some necessary conditions for the existence of such codes.

%%%%%%%%%%%%%%%%%%%%%%%%%%%%%%%%%%%%%%%%%%%%%%%%%%%%%%%%%%%%%%%%%%%%%%%%%%%%%%%%%%%%%%%%%

\section{Preliminaries}

In a distributed storage system we store a file $x\in \F_q^k$, encoded as some codeword $c\in \F_q^n$, over several storage nodes. For simplicity we assume that each node 
stores one coordinate of $c$.  If some of these nodes fail, we want to be able to recover the lost information with as little "effort" as possible. One of the important parameters in this context is the \emph{locality} of a code for such a distributed storage system, which is the number of nodes one has to contact to repair a lost node. We call the set of nodes one has to contact if a given node fails, the locality group of that node.  The topology given by the set of all locality groups is also called a \emph{configuration}.

\begin{defi}
A code is called \emph{maximally recoverable (MR}) for a given configuration, if any erasure pattern that is information theoretically correctable is correctable. 
\end{defi}

From now on we consider a distributed storage system with $m$ disjoint locality groups, where the $i$-th group is of size $n_i$ ($i=1,\dots,m$) and can correct any $r_i$ erasures. Analogously we can separate the coordinates of the code (of length $n$) into blocks of length $n_1, n_2, \dots, n_m$ such that $\sum_{i=1}^m n_i=n$ and such that each block represents a locality group. Furthermore, we fix the locality for the whole code to be $\ell $.  %, i.e., $n_i=r_i+\ell$. 

We denote an MDS code of length $n$ and dimension $k$ (and hence with minimum distance $n-k+1$) by $[n,k]$-MDS code.\footnote{We assume that the reader is familiar with the concept of MDS (maximum distance separable) codes, otherwise we refer to an introductory book on coding theory, e.g., \cite{ma77}.} We can now define PMDS codes, generalizing the definition of Blaum-Hafner-Hetzler \cite{bl13} to locality groups of different sizes but with a fixed locality $\ell$: 

\begin{defi}
 Let $\ell,m, r_1,\dots,r_m \in \mathbb N$. Define $n:=\sum_{i=1}^m (r_i+\ell)$ and let $C\subseteq \F_q^n$ be a linear code of dimension $k<n$
 with generator matrix
 %\begin{equation}\label{genmatrix}
$$G= \left( B_1 \mid \dots \mid B_m \right) \in \F_q^{k\times n}.$$
%\end{equation}
such that $B_i\in \F_q^{k\times ( r_i + \ell)}$. 
 Then $C$ is a $[n,k, \ell ; r_1,\dots,r_m]$-\emph{partial-MDS (PMDS) code} if 
 \begin{itemize}
 \item 
 for $i\in \{1,\dots,m\}$ the row space of $B_i$ is a $[r_i+\ell, \ell ]$-MDS code, and 
 \item
for any $r_i$ erasures in the $i$-th block ($i=1,\dots, m$), the remaining code (after puncturing the coordinates of the erasures) is a $[m\ell, k]$-MDS code.
\end{itemize}
% Therefore, if we want to be able to correct $s$ additional erasures, the dimension of $C$ is $ k= \sum_{i=1}^m (n_i-r_i)-s$.
\end{defi}

The idea of PMDS codes is to be able to correct more erasures than the prescribed $r_i$ erasures per block. In particular, the erasure correction capability of PMDS code is as follows.

\begin{lem}\label{lem:PMDScap}
A $[n,k, \ell ; r_1,\dots,r_m]$-PMDS code can correct any $r_i$ erasures in the $i$-th block (simultaneously) plus $s:= m\ell -k$ additional erasures anywhere in the code.
\end{lem}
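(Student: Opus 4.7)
The plan is to reduce the erasure-correction task to an MDS decoding problem on an appropriate punctured code, using the second defining property of a PMDS code directly.

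First, I fix an arbitrary erasure pattern of the prescribed type: in each block $i$ there are $r_i$ erasures, and on top of those there are $s=m\ell-k$ further erasures distributed arbitrarily among the $m$ blocks. Call the first family of positions $E_1$ (so $|E_1\cap\text{block }i|=r_i$ for all $i$) and the second family $E_2$ (so $|E_2|=s$ and $E_1\cap E_2=\emptyset$). My aim is to show that the codeword is uniquely determined by the coordinates outside $E_1\cup E_2$.

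Next, I puncture the code $C$ at the coordinate set $E_1$. By the second condition in the definition of a PMDS code, the resulting code $C'$ has length $n-\sum_i r_i=m\ell$, dimension $k$, and is MDS. The erasures in $E_2$ are now erasures of the codeword of $C'$ obtained by puncturing, and there are exactly $s=m\ell-k$ of them. Since an $[m\ell,k]$-MDS code has minimum distance $m\ell-k+1=s+1$, any $s$ erasures are correctable, so the punctured codeword, and hence the original codeword restricted to the coordinates outside $E_1$, is uniquely recovered from the unerased positions. Finally, the $r_i$ erased coordinates in block $i$ are recovered from the $\ell$ surviving coordinates of block $i$ using the first condition in the definition, which states that the row space of $B_i$ is an $[r_i+\ell,\ell]$-MDS code and hence tolerates any $r_i$ erasures.

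The argument is essentially a bookkeeping exercise once one notices that the roles of the two defining properties separate cleanly: the MDS condition on the punctured length-$m\ell$ code handles the $s$ global erasures, and the local MDS condition on each $B_i$ handles the $r_i$ erasures per block after the global erasures have been resolved. The only point requiring a moment of care is that the $s$ extra erasures may themselves lie inside blocks, possibly in blocks that already contain their $r_i$ erasures; this is harmless because we only need to puncture $r_i$ erased coordinates per block (any choice works), leaving all remaining erasures to be treated as erasures in the $[m\ell,k]$-MDS punctured code. I do not anticipate a real obstacle; the statement is essentially a direct unpacking of the definition.
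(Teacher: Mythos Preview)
Your proof is correct and follows essentially the same approach as the paper: puncture at the $r_i$ erased coordinates in each block, invoke the $[m\ell,k]$-MDS property to correct the remaining $s$ erasures, and then use the local $[r_i+\ell,\ell]$-MDS structure of each $B_i$ to recover the punctured positions. Your write-up is in fact more careful than the paper's, which states the two correction steps in the opposite order without noting (as you do) that one must resolve the $s$ global erasures first so that $\ell$ surviving coordinates are available in every block for the local repair.
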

\begin{proof}
The code can correct any $r_i$ erasures in the $i$-th block, because the block represents a $[r_i+\ell, \ell ]$-MDS code. Furthermore, if we puncture the code in the $r_i$ erased coordinates in each block, by definition the remaining code is an MDS code of length $n-\sum_{i=1}^m r_i = m\ell$ and dimension at most $k$, which implies the statement.
\end{proof}

One can easily see that for the above definition to make sense we need $k\geq \ell$, which we will assume for the whole paper.
If we have equality then there exist only trivial PMDS codes, in the sense that they are MDS codes:

\begin{prop}\label{prop:triv}
If $k= \ell$, then a code is a $[n,k, k; r_1,\dots,r_m]$-PMDS code if and only if it is a $[n,k]$-MDS code.
\end{prop}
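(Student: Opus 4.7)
The plan is to prove both implications via the puncturing behavior that the PMDS definition already encodes.

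For the forward direction ($\Rightarrow$), I will show that any $k=\ell$ columns of the generator matrix $G$ are linearly independent, which characterizes MDS codes. Fix a set $T$ of $k$ column indices and write $t_i := |T \cap B_i|$. Since $\sum_{i=1}^m t_i = k = \ell$ and $t_i \geq 0$, necessarily $t_i \leq \ell$ for every $i$. Hence the complement $E := T^c$ satisfies $|E \cap B_i| = (r_i + \ell) - t_i \geq r_i$, so I can select a subset $P \subseteq E$ with exactly $r_i$ indices in block $B_i$. By the second PMDS condition, the code obtained by puncturing the coordinates in $P$ is an $[m\ell,k]$-MDS code. Since $T \subseteq P^c$ and $|T|=k$, the MDS property of this punctured code forces the $k$ columns of $G$ indexed by $T$ to be linearly independent, completing this direction.

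For the backward direction ($\Leftarrow$), assume $C$ is $[n,k]$-MDS with $k=\ell$ and verify the two PMDS conditions. For the first, the submatrix $B_i \in \F_q^{\ell \times (r_i+\ell)}$ inherits the property that any $\ell$ of its columns are linearly independent (being $\ell$ columns of the MDS generator $G$), so its row space is an $[r_i+\ell,\ell]$-MDS code. For the second, consider any puncturing set $P$ with $|P \cap B_i| = r_i$, so $|P| = \sum_i r_i = n - m\ell$. Since $m\ell \geq \ell = k$, we have $|P| \leq n - k$, and the standard fact that puncturing an MDS code at up to $n-k$ coordinates preserves the MDS property (and the dimension) yields an $[m\ell, k]$-MDS code.

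No genuine obstacle is anticipated; the content is essentially a dimension count. When $k=\ell$, the constraint $\sum t_i = \ell$ automatically prevents any block from being over-represented in an information set, which is exactly what is required to invoke the PMDS puncturing property in the forward direction. The backward direction is then a direct application of the standard MDS puncturing lemma. The only care required is the bookkeeping of the block-wise counts $t_i$ and $e_i = r_i + \ell - t_i$ so that a valid puncturing set $P$ is produced.
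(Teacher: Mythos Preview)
Your proof is correct and follows essentially the same approach as the paper: in the forward direction you show that any $k$ columns of $G$ can be embedded in some punctured $[m\ell,k]$-MDS subcode (the paper phrases this as every $k\times k$ minor of $G$ appearing in some post-puncturing submatrix $S$), and in the backward direction both you and the paper invoke the standard fact that puncturing an MDS code yields an MDS code. Your bookkeeping via the block counts $t_i$ is a bit more explicit than the paper's, but the argument is the same.
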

\begin{proof}
Assume the code is $[n,k, k; r_1,\dots,r_m]$-PMDS. 
Let $S$ be a submatrix of the generator matrix $G= \left( B_1 \mid \dots \mid B_m \right) $ after erasing $r_i$ columns per block $B_i$. Then each block $B_i$ still has $k$ columns, i.e., for $S$ to generate an MDS code all minors (including minors completely inside one block) need to be non-zero. Since we need to check the MDS property for any such $S$, all minors of $G$ need to be non-zero, i.e., $G$ generates an MDS code.

For the other direction assume that $C$ is a $[n,k]$-MDS code. Then any punctured version of $C$ is also an MDS code of dimension $k=\ell$. In particular, the two conditions for the PDMS property are fulfilled.
%we can reconstruct any erasure in the code with the information of $k$ correct coordinates. Since $k=\ell$ is also the dimension of the MDS code generated by each block $B_i$, we can always correct $r_i$ erasures inside each block. Furthermore, since $C$ is MDS, also any code retrieved by puncturing $C$ is MDS. Thus $C$ is $[n,k, k; r_1,\dots,r_m]$-PMDS.
\end{proof}

We note that Proposition \ref{prop:triv} also includes the case $m=1$, since this automatically implies $k=\ell$. 
Furthermore, the case $k=1$ is also included, since this implies $\ell=k=1$.

\begin{rem}
In \cite{bl13} PMDS codes are studied with respect to RAID architectures, where all blocks have the same size $n_1=\ell+r_1$, such that a codeword is written in an $m\times n_1$ array and complete columns are erased when a RAID disk fails. Moreover, due to the physical nature of solid state disks, with age $s$ additional erasures may occur anywhere in the codeword.
\end{rem}

It was shown in \cite[Lemma 4]{go14} that, in the case that the locality groups are disjoint and $r_i=1$ for $i=1,\dots,m$, the MR property is equivalent to the PMDS property. That MR implies PMDS for configurations with disjoint locality blocks is straight-forward. The other direction was proved by showing that any other erasure pattern than at most one erasures per block plus $s$ extra erasures anywhere cannot be correctable at all. Thus any $[n,k,\ell; 1,\dots,1]$-PMDS code is also an MR code. 
%Note that in \cite[Lemma 4]{go14} the authors assume the locality gr[ups to have the same size, however their proof is straight-forwardly adjustable for locality groups of different sizes and hence to our definition of PMDS codes. 
We will now generalize this result to variable values of $r_1,\dots,r_m$.

\begin{thm}
A $[n,k, \ell; r_1,\dots,r_m]$-PMDS code is maximally recoverable.
\end{thm}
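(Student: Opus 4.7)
The plan is to characterize the information-theoretically correctable erasure patterns for this configuration and then show that a PMDS code corrects all of them. For an erasure pattern $E$, let $e_i$ count the erasures in the $i$-th block and set $v_i := \max(e_i - r_i, 0)$, the number of ``heavy'' erasures in that block beyond local repair capacity. I claim $E$ is information-theoretically correctable precisely when $V := \sum_{i=1}^{m} v_i \leq s$.

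For the necessity direction (to identify which patterns must be corrected), I would take any code $C$ of dimension $k$ whose restriction to block $i$ is an $[n_i,\ell]$-MDS code. Puncturing this restriction to the $n_i - e_i$ surviving coordinates of block $i$ yields a code of dimension $\min(\ell,\, n_i - e_i) = \ell - v_i$. Since $\pi_{E^c}(C)$ sits inside $\bigoplus_i \F_q^{\,n_i - e_i}$, projecting onto each summand gives
\[
\dim \pi_{E^c}(C) \;\leq\; \sum_{i=1}^{m} (\ell - v_i) \;=\; m\ell - V,
\]
which is strictly less than $k$ as soon as $V > s$. Hence such patterns force a nonzero codeword supported in $E$, and cannot be corrected by any code with the prescribed configuration.

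For the sufficiency direction---the actual MR claim---I would use a padding trick. For each block $i$, extend the set of erasures in block $i$ to a set $E_i^{\mathrm{std}}$ of exactly $r_i$ positions (if $e_i < r_i$, add $r_i - e_i$ arbitrary surviving positions, which is possible since block $i$ has $n_i - e_i \geq \ell$ surviving positions; if $e_i \geq r_i$, pick any $r_i$ of the erased ones and leave the remaining $v_i$ as ``heavy''). Set $E^{\mathrm{std}} := \bigcup_i E_i^{\mathrm{std}}$ and $E^{\mathrm{heavy}} := E \setminus E^{\mathrm{std}}$, so $|E^{\mathrm{heavy}}| = V \leq s$. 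If $c \in C$ is supported in $E$, then puncturing $c$ at $E^{\mathrm{std}}$ lands in the $[m\ell, k]$-MDS code provided by the PMDS definition; this punctured vector has weight $\leq V \leq s$, strictly less than the minimum distance $s+1$, so it is zero. Therefore $c$ is supported in $E^{\mathrm{std}}$, and for each $i$ the block component $\pi_i(c)$ has weight $\leq r_i$ in the local $[n_i,\ell]$-MDS code of minimum distance $r_i + 1$, forcing $\pi_i(c) = 0$ and hence $c = 0$. (This sufficiency step is essentially the content of Lemma~\ref{lem:PMDScap}, which may be invoked directly.)

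The main obstacle is really only bookkeeping: verifying that the padding step is well-defined when $e_i < r_i$, and keeping straight which minimum-distance bound applies at which scale---namely $r_i + 1$ for the local $[n_i,\ell]$-MDS blocks, and $s+1$ for the global $[m\ell, k]$-MDS code produced by the PMDS puncturing property.
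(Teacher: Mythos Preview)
Your proof is correct and follows the same two-step skeleton as the paper: invoke Lemma~\ref{lem:PMDScap} for sufficiency, then argue that no code with the given configuration can correct the remaining patterns. The chief difference is in the necessity direction. You make the threshold explicit by introducing $V=\sum_i \max(e_i-r_i,0)$ and bound $\dim \pi_{E^c}(C)\le \sum_i(\ell-v_i)=m\ell-V$ directly from the block structure, so $V>s$ forces a nonzero codeword supported in $E$. The paper instead pads the ``light'' blocks of $E$ up to exactly $r_i$ erasures to obtain $E'$ with $|E'|>n-k$, and observes that $E'$ is correctable iff $E$ is. These are equivalent in content---your dimension inequality is precisely the statement that padding pushes the total erasures past the redundancy---but your formulation via $V$ is cleaner and pinpoints the exact set of information-theoretically correctable patterns, whereas the paper's phrasing ``affecting $m'$ locality groups with $|E|>\sum_{i=1}^{m'} r_i+s$'' only reads correctly if one takes ``affected'' to mean $e_i>r_i$ rather than $e_i>0$.
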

\begin{proof}
We know by Lemma \ref{lem:PMDScap} that any erasure pattern with at most $r_i$ erasures in the $i$-th block, plus $s=m\ell -k$ additional erasures anywhere in the code is correctable with a $[n,k, \ell; r_1,\dots,r_m]$-PMDS code. It remains to show that any other erasure pattern is not correctable by any code of length $n$ and dimension $k$ for the given locality conditions.

Let $E$ be an erasure pattern affecting $m'<m$ locality groups (WLOG we assume that the first $m'$ groups are affected) with $|E|>\sum_{i=1}^{m'} r_i +s$. 
Extend $E$ to a larger pattern of erasures $E'$ erasing $r_i$ arbitrary coordinates in the $i$-th locality group for $i=m'+1,\dots,m$. It holds that $E'$ is correctable if $E$ is correctable since the $i$-th locality group can correct any $r_i$ erasures. Note that the size of $E'$ exceeds the redundancy of the code, since
$$|E'| > \sum_{i=1}^{m} r_i +s = n-m\ell +s =n-k ,$$ 
hence $E'$ is not correctable by any code of length $n$ and dimension $k$.

Similarly, if all $m$ locality groups are affected by an erasure pattern $E$ and we have $|E|>\sum_{i=1}^{m} r_i +s$, then $|E| >n-k $, and hence $E$ is not correctable.
\end{proof}

In the following we give a brief overview of known results for (non-trivial) PMDS codes.

\begin{prop}\cite{ch07}
MR codes of length $n$ and dimension $k$ exist for any configuration over any finite field of size $q> \binom{n-1}{k-1}$.
\end{prop}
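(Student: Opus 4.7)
The statement is the classical existence bound for MR codes, and the natural approach is a greedy/probabilistic construction of a suitable generator matrix. I would first recast the MR property linear-algebraically: fixing the locality structure dictated by the configuration, a code with generator matrix $G\in\F_q^{k\times n}$ is MR if and only if, for every information-theoretically correctable erasure pattern $E\subseteq\{1,\dots,n\}$, the columns of $G$ indexed by the surviving set $[n]\setminus E$ span $\F_q^k$. Since any correctable $E$ has $|E|\le n-k$, the MR rank conditions form a subfamily of the full MDS-style collection of conditions (every $k$ columns independent), so the counting is dominated by the MDS case.

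The plan is then to construct the columns $g_1,\dots,g_n$ of $G$ one at a time, respecting the locality constraints (which force some columns within each locality group to be prescribed linear combinations of the others through the local parity equations). When placing $g_j$, it must avoid lying in any $(k-1)$-dimensional subspace spanned by $k-1$ previously placed columns whose completion by $g_j$ would violate one of the required rank conditions. There are at most $\binom{j-1}{k-1}$ such ``bad'' hyperplanes in $\F_q^k$, each consisting of $q^{k-1}$ vectors, so a feasible $g_j$ exists whenever
\[
q^{k} \;>\; \binom{j-1}{k-1}\,q^{k-1}, \qquad \text{i.e.,} \qquad q \;>\; \binom{j-1}{k-1}.
\]
Taking the worst case $j=n$ yields the stated bound $q>\binom{n-1}{k-1}$.

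\textbf{Main obstacle.} The delicate point is that the columns of $G$ cannot all be chosen freely: the configuration imposes rigid linear relations (the local parity equations) that several columns per group must satisfy, and one must verify that the greedy choice at each step can both (i) be compatible with these forced dependencies and (ii) escape the union of bad hyperplanes. A convenient way to manage this is to designate $k$ ``information'' columns enjoying full freedom in $\F_q^k$, derive the remaining ``parity'' columns from the local and global parity equations, and then translate the MR independence conditions into the non-vanishing of certain polynomial expressions in the free columns. I expect checking that the forced relations still leave enough freedom to avoid all bad hyperplanes at every step to be the main technical point; once this bookkeeping is in place, the counting argument above yields the MDS-style bound $q>\binom{n-1}{k-1}$ for MR existence.
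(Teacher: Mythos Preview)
The paper does not give its own proof of this proposition; it is simply quoted from \cite{ch07} as a known existence result, with no argument reproduced. So there is nothing in the paper to compare your proposal against.

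That said, your plan is the standard route and is essentially sound. Two comments. First, your counting already upper-bounds the MR conditions by the full MDS collection (``every $k$ columns independent''), which is why the bound $\binom{n-1}{k-1}$ appears; this is exactly the point of the proposition, so you need not worry about tightening the count to only the information-theoretically correctable patterns. Second, the obstacle you flag is real, but it is easier to handle on the parity-check side than on the generator side. Working with an $(n-k)\times n$ parity-check matrix $H$, the configuration merely forces certain entries of each column to be zero (the local parity rows are supported only on their locality group), while the remaining entries of every column are genuinely free; no column of $H$ is forced to be a fixed linear combination of earlier ones. The greedy step then reads: the free coordinates of $h_j$ must avoid the intersections of at most $\binom{j-1}{n-k-1}$ hyperplanes with the coordinate subspace carrying $h_j$, and one checks that each such intersection is a proper subspace precisely because the pattern in question is information-theoretically correctable for the given configuration. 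This sidesteps the bookkeeping you anticipate for derived parity columns in $G$ and yields the bound directly; it is also closer in spirit to how the original reference argues.
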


Since MR codes are PMDS codes for disjoint locality blocks, the above result also implies that PMDS codes exist for any set of parameters if the field size is large enough.

In the following results it is assumed that all locality blocks have the same length $n_i=n/m$  (for $i=1,\dots,m$), which is why we must assume that $m| n$. 
A general construction of PMDS codes, based on rank-metric and MDS codes, was given in \cite{ca17}. This gives the following existence result:
\begin{prop}\cite{ca17}
$[n,k,\ell; r,\dots,r]$-PMDS codes with $m$ locality blocks of the same length exist over a finite field of size  $q^{n-mr}$.
\end{prop}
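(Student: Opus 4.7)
The plan is to construct the code explicitly as a product $G = G_{\mathrm{Gab}} \cdot G_{\mathrm{loc}}$ of a ``global'' Gabidulin (MRD) generator matrix and a ``local'' block-diagonal matrix of MDS generators. Since $n - mr = m\ell$, the natural ground field is $\F := \F_{q^{m\ell}}$, over which a Gabidulin $[m\ell,k]$-code exists.

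First, for each $i = 1,\dots,m$ I would pick a generator matrix $G_i \in \F_q^{\ell \times (r+\ell)}$ of some $[r+\ell,\ell]$-MDS code $\cC_i$ over $\F_q$ (or over $\F$ if $q$ is too small) and form the block-diagonal matrix $G_{\mathrm{loc}} := \mathrm{diag}(G_1,\dots,G_m) \in \F_q^{m\ell \times n}$. Next, let $G_{\mathrm{Gab}} \in \F^{k \times m\ell}$ be a generator matrix of a Gabidulin $[m\ell,k]$-code; the key property I will invoke is that Gabidulin codes are MRD, so $G_{\mathrm{Gab}} A$ has $\F$-rank $k$ for every $A \in \F_q^{m\ell \times k}$ of $\F_q$-rank $k$. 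I then set $G := G_{\mathrm{Gab}} G_{\mathrm{loc}} \in \F^{k \times n}$ and take $\C$ to be the code it generates.

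For the local condition, the $i$-th block of $G$ equals $B_i = (G_{\mathrm{Gab}})_{[i]} G_i$, where $(G_{\mathrm{Gab}})_{[i]}$ is the $k \times \ell$ column-block of $G_{\mathrm{Gab}}$ at positions of block $i$. Since $G_{\mathrm{Gab}}$ is MDS, this submatrix has rank $\ell$, so the row space of $B_i$ coincides with that of $G_i$ (extended to $\F$), namely an $[r+\ell,\ell]$-MDS code. For the global condition, puncturing $r$ coordinates per block leaves the generator $G_{\mathrm{Gab}} \widetilde G_{\mathrm{loc}}$, where $\widetilde G_{\mathrm{loc}}$ is block-diagonal with $\ell \times \ell$ invertible blocks (MDS property of each $\cC_i$); hence $\widetilde G_{\mathrm{loc}} \in \GL_{m\ell}(\F_q)$. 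Any $k$ of its columns form an $m\ell \times k$ matrix $A$ of full $\F_q$-rank, and the MRD property forces $\mathrm{rank}(G_{\mathrm{Gab}} A) = k$. So every $k \times k$ submatrix of the punctured generator is invertible, proving the $[m\ell,k]$-MDS property.

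The main conceptual step---and the reason for the extension degree $n-mr$---is this final rank-distance argument: it converts the combinatorial ``every $k$-subset of columns is independent'' requirement on $G_{\mathrm{Gab}} \widetilde G_{\mathrm{loc}}$ into an algebraic rank condition intrinsic to Gabidulin/MRD codes, which in turn follows from the fact that rank distance collapses under $\F_q$-linear column operations. Everything else is routine bookkeeping, and the field size $|\F| = q^{m\ell} = q^{n-mr}$ matches the claim.
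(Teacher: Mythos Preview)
The paper does not actually prove this proposition; it is quoted from \cite{ca17} with only the one-line description that the construction there is ``based on rank-metric and MDS codes''. Your reconstruction is precisely that construction: an outer Gabidulin (MRD) generator times a block-diagonal matrix of inner MDS generators, and your verification of both PMDS conditions is correct. In particular, the key step---that $G_{\mathrm{Gab}}A$ has full rank whenever $A\in\F_q^{m\ell\times k}$ has full $\F_q$-rank---is exactly the MRD property, and your reduction of the punctured-MDS condition to this fact is the heart of the argument in \cite{ca17}.

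One genuine caveat: your parenthetical ``(or over $\F$ if $q$ is too small)'' would break the proof. The rank-distance step needs $\widetilde G_{\mathrm{loc}}$, and hence $A=\widetilde G_{\mathrm{loc}}P$, to have entries in the \emph{small} field $\F_q$; if the local generators $G_i$ live genuinely in $\F=\F_{q^{m\ell}}$, then $A$ is no longer an $\F_q$-matrix and the MRD property of the Gabidulin code over $\F_{q^{m\ell}}/\F_q$ says nothing about $G_{\mathrm{Gab}}A$. The correct reading of the statement (and of \cite{ca17}) is that $q$ is any prime power for which $[r+\ell,\ell]$-MDS codes exist over $\F_q$; the PMDS code then lives over $\F_{q^{m\ell}}=\F_{q^{n-mr}}$. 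With that understood, your argument is complete.
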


Furthermore, some specific constructions of PMDS codes, either for small $r$ or small $s$, are given in 
%BHH 
\cite{bl13,bl14,bl16,go14}. %Most of the respective field sizes are again in the order of $O(2^n)$.
In \cite{bl13} a construction of PMDS codes with $s=1$ and equal block length $n_i=n/m$ over $\F_q$ with $q=2^b\geq \max\{n_i,m\}$ was given. 

Another construction for PMDS codes with $s=1$ and equal block length $n_i=n/m$, requiring field size $q\geq n_i$, was given in \cite[Theorem 1]{ch15}:
\begin{prop}\cite{ch15}
$[n,m\ell-1,\ell; r,\dots,r]$-PMDS codes exist over any finite field of size $q\geq n/m$.
\end{prop}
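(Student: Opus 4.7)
My approach is to give an explicit construction built from Reed--Solomon codes in each block, tied together by one carefully chosen global linear relation, and then to verify the two defining conditions directly.

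Since $q\ge r+\ell$, fix distinct elements $\alpha_1,\dots,\alpha_{r+\ell}\in\F_q$. Parametrize a codeword by $m$ polynomials $p_1,\dots,p_m\in\F_q[t]_{<\ell}$, placing $(p_i(\alpha_1),\dots,p_i(\alpha_{r+\ell}))$ into the $i$-th block, and impose the single linear relation that the leading ($t^{\ell-1}$) coefficients of $p_1,\dots,p_m$ sum to zero. This nontrivial hyperplane in the $m\ell$-dimensional message space yields a linear code of dimension exactly $m\ell-1=k$. The per-block MDS requirement forces $m\ge 2$, and granting this, the projection onto block $i$ is surjective onto the $[r+\ell,\ell]$-Reed--Solomon code on the $\alpha_j$, because for any prescribed $p_i$ the relation can always be satisfied by adjusting the leading coefficient of some other $p_{i'}$. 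So each block is an $[r+\ell,\ell]$-MDS code as required.

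For the PMDS condition, fix any puncturing that removes $r$ coordinates from each block and let $I_i\subseteq\{1,\dots,r+\ell\}$ be the $\ell$ surviving positions in block $i$. The surviving coordinates of block $i$ are the image of the coefficient vector of $p_i$ under an $\ell\times\ell$ Vandermonde matrix on $\{\alpha_j:j\in I_i\}$, which is invertible, so the punctured code still has dimension $m\ell-1$. To see it is an $[m\ell,m\ell-1]$-MDS code it suffices to show that one further erasure can be corrected; place it in block $1$ without loss of generality. The $\ell$ surviving coordinates of each of blocks $2,\dots,m$ recover $p_2,\dots,p_m$ completely by Reed--Solomon decoding, and the global relation then gives the leading coefficient of $p_1$. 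Combined with the $\ell-1$ surviving evaluations $p_1(\alpha_j)$ in block $1$, this is a system of $\ell$ linear equations in the $\ell$ coefficients of $p_1$ whose coefficient matrix is an $\ell\times\ell$ submatrix of the extended Reed--Solomon generator, obtained by appending the column $(0,\dots,0,1)^{T}$. Every such submatrix is nonsingular: if it avoids the appended column it is a Vandermonde on distinct $\alpha_j$, and if it uses that column, expanding along it gives an $(\ell-1)\times(\ell-1)$ Vandermonde, again nonzero. Hence $p_1$ is uniquely recovered.

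The main obstacle I anticipate is the boundary case $q=r+\ell$: the natural global relation $\sum_i p_i(\gamma)=0$ for an extra point $\gamma\in\F_q\setminus\{\alpha_1,\dots,\alpha_{r+\ell}\}$ is simply unavailable when $\F_q$ is exhausted by the $\alpha_j$. The fix is to replace ``evaluation at $\gamma$'' by ``reading off the leading coefficient,'' which corresponds to evaluating at the point at infinity and amounts to the fact that singly extended Reed--Solomon codes remain MDS at length $r+\ell+1$ over any $\F_q$ with $q\ge r+\ell$. Once that translation is made, the rest of the argument reduces to the Vandermonde computation above.
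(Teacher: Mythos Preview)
Your proof is correct and follows essentially the same approach as both the cited construction of \cite{ch15} and the paper's own generalization in Theorem~\ref{th:MRri} and Corollary~\ref{co:MRr}: Reed--Solomon codes on a common evaluation set in every block, glued by a single global parity that corresponds to the ``point at infinity'' of the singly extended Reed--Solomon code. The only cosmetic difference is presentation---you describe codewords operationally via polynomials and verify the MDS-after-puncturing property by a decoding argument, whereas the paper writes down the block-structured generator matrix explicitly and checks all maximal minors; in both cases the crux is that the $\ell\times(\ell+r_i+1)$ extended Reed--Solomon generator has all maximal minors nonzero, which you verify with the same Vandermonde expansion.
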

This construction (as the ones of \cite{bl13}) is based on Vandermonde matrices and thus equivalent to using generalized Reed-Solomon codes as building blocks. In the following section we will give a generalized construction of PMDS codes with $s=1$, allowing various block lengths and any MDS codes as building blocks. In some cases this generalized construction will allow us to reduce the field size compared to the construction of \cite{ch15}. Moreover, in contrast to the construction of \cite{ch15}, our construction provides generator matrices in systematic (or standard) form.

Note that a natural lower bound on the field size is given by the condition that every block constitutes a $[r_i + \ell, \ell]$-MDS code. To derive a bound from this condition we assume that the MDS conjecture (\cite{se55}, see also \cite[Conjecture 11.16]{roth06}) is true (and it has been proven for many parameter sets):

\begin{conj}[MDS Conjecture]\label{MDSconj}\cite{se55}
A $[n,k]$-MDS code with $1<k<n-1$ over $\F_q$ has length $n\leq q+1$, unless $q=2^h$ and $k\in \{3,q-1\}$, in which case $n\leq q+2$.
\end{conj}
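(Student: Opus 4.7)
The plan is to recognize this as Segre's classical MDS Conjecture, which is still open in full generality; I would therefore sketch the approach that has succeeded in the known cases rather than claim a full proof. The starting move is the standard translation to projective geometry: given a generator matrix $G \in \F_q^{k \times n}$ of a $[n,k]$-MDS code, view the $n$ columns as points of $\mathrm{PG}(k-1,q)$. The MDS property is equivalent to the statement that no $k$ of these points are linearly dependent, i.e.\ they form an \emph{$n$-arc}. Thus the conjecture becomes the assertion that the maximum size of an arc in $\mathrm{PG}(k-1,q)$ is $q+1$, with the single exceptional family of hyperovals in even characteristic accounting for the $n\leq q+2$ clause.

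Next I would exploit duality: the dual code of a $[n,k]$-MDS code is a $[n,n-k]$-MDS code, so the corresponding arc in $\mathrm{PG}(n-k-1,q)$ is also an $n$-arc. This symmetry lets one assume, without loss of generality, that $k$ lies in the lower half of the range, typically $3\le k\le (q+2)/2$. The base case $k=3$ is Segre's theorem on arcs in the projective plane: for odd $q$ every $(q+1)$-arc is a conic, while for even $q$ the hyperovals give the exceptional $n=q+2$ examples. This handles $k\in\{3,n-3\}$ directly and, via puncturing arguments on arcs, bootstraps partial information for larger $k$.

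The main obstacle is the inductive step to general $k$. Here I would turn to the \emph{polynomial method} developed by Segre, Thas, Blokhuis, Ball and others: one attaches to an arc a symmetric multivariate polynomial (the Segre--Thas function) that vanishes on a carefully chosen set of hyperplane configurations, and then one derives algebraic constraints on its degree and factorization. Ball's theorem (2012) pushes this through for $q$ prime, and subsequent work of Ball--De Beule extends it to further parameter ranges. For a general prime power $q$, however, no known argument closes the gap, so my proof proposal must honestly stop here: the conjecture is assumed, as the paper itself does, and the only rigorous guarantee I can provide is for the parameter ranges already treated in the literature.
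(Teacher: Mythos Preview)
Your proposal is appropriate in spirit: you correctly recognize that this statement is the classical MDS Conjecture of Segre, which remains open in full generality, and you honestly refrain from claiming a complete proof. The paper itself offers no proof whatsoever---the statement is labelled as a \emph{Conjecture}, cited to Segre~\cite{se55}, and is explicitly \emph{assumed} rather than established (see the phrasing ``Assuming that the MDS Conjecture is correct'' in Corollaries~\ref{co:MRr2} and~\ref{cor:23}). So there is nothing in the paper to compare your argument against.

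What you have written is a reasonable survey of the geometric reformulation via arcs in $\mathrm{PG}(k-1,q)$, the duality reduction, Segre's theorem for $k=3$, and the polynomial method of Ball and Ball--De~Beule for the prime case. This is all correct background, but it goes well beyond what the paper requires: the paper only \emph{uses} the conjecture as a black-box hypothesis to derive lower bounds on the field size. If the intent was to supply a proof of the labelled statement, you should simply note that none exists in general and that the paper does not purport to give one.
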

It follows that, for $\ell >1$, a $[n,k,\ell; r_1,\dots,r_m]$-PMDS code cannot exist over $\F_q$ if $q<\max_i\{r_i + \ell -1\}$, except if $(\ell, \max_i\{r_i\}) \in \{(3,2^h-1), (2^h-1,3)\}$, in which case such a code cannot exist if  $q<\max_i\{r_i + \ell -2\}$.
In Corollary \ref{co:MRr2} of this paper we will show that, in the case that $s = 1$, this bound cannot be obtained in general, but that it has to be increased by $1$. Furthermore, we show in Corollary \ref{cor:23} that, for $s>1$, a lower bound for the field size is given by $\max_i\{r_i + \ell  + s-1\}$ (except for some special parameter sets).

%%%%%%%%%%%%%%%%%%%%%%%%%%%%%%%%%%%%%%%%%%%%%%%%%%%%%%%%%%%%%%%%%%%%%%%%%%%%%%%%%%%%%%%%%

\section{Generalized Construction of PMDS Codes with $s=1$}\label{sec:const}

We will now present a general construction for PMDS codes of dimension $k=m\ell -1$; those codes can correct $s=m\ell -k=1$ additional erasure anywhere in the code. Because of Proposition \ref{prop:triv} we assume that the number of locality blocks is $m>1$.

\begin{thm}\label{th:MRri}
For any integers $m\geq 2 $ and $\ell,r_1,\dots,r_m \geq 1$, the following generator matrix $G$ generates a $[n,k, \ell; r_1,\dots,r_m]$-PMDS code of length $n=  m\ell+ \sum_i r_i$, dimension $k=m\ell -1$ and locality $\ell$  over $\F_q$:
$$ G= \left(\begin{array}{cccc|c} B_1 & 0 & \dots & 0 & M \\  0 & B_2 & \dots & 0 & M \\ 
\vdots & \vdots & \ddots & \vdots & \vdots \\ 0 & 0 & \dots & B_{m-1} & M \\\hline  0 & 0 & \dots & 0& A  \end{array}\right)   ,$$
where
$$ B_i=\left( \begin{array}{ccccccc} 1&0&\dots &0& x^{(i)}_{1,1}&\dots&x^{(i)}_{1,r_i} \\ 0&1&\dots&0& x^{(i)}_{2,1}&\dots&x^{(i)}_{2,r_i}\\ \vdots&&\ddots&&\vdots &  & \vdots\\ 0&0&\dots&1& x^{(i)}_{\ell,1}&\dots&x^{(i)}_{\ell,r_i} \\   \end{array}\right)       \quad   \in \F_q^{\ell\times (\ell+r_i)}  ,$$
$$ A=\left( \begin{array}{cccccccc} 1&0&\dots &0&1&  x^{(m)}_{1,1}&\dots&x^{(m)}_{1,r_m} \\ 0&1&\dots&0&1&  x^{(m)}_{2,1}&\dots&x^{(m)}_{2,r_m} \\ \vdots&&\ddots&&\vdots&\vdots & & \vdots \\ 0&0&\dots&1&1&  x^{(m)}_{\ell-1,1}&\dots&x^{(m)}_{\ell-1,r_m}   \end{array}\right)       \quad   \in \F_q^{(\ell-1)\times (\ell+r_m)}  ,$$
$$ M=\left( \begin{array}{cccccccc} 0&0&\dots &0&1& x^{(m)}_{\ell,1} &\dots&x^{(m)}_{\ell,r_m} \\ 0&0&\dots &0&1& x^{(m)}_{\ell,1}&\dots&x^{(m)}_{\ell,r_m}  \\ \vdots&&\ddots&&\vdots&\vdots & &\vdots \\  0&0&\dots &0&1& x^{(m)}_{\ell,1} &\dots&x^{(m)}_{\ell,r_m}  \end{array}\right)       \quad   \in \F_q^{\ell\times (\ell+r_m)}   ,$$
such that, for $i=1,\ldots, m-1$, the matrices
$$ \widehat{B}_i=\left(\quad B_i \quad \begin{array}{|c} 1 \\ 1 \\ \vdots \\ 1  \end{array}\right) $$
are generator matrices of a $[\ell+r_i+1, \ell]$-MDS code and
%$$ \widehat{A}=\left(\begin{array}{c|cccc} 1 & & M_{(\ell)}  & \\ \hline 0  & & \\ \vdots &  & A &  & \\  0 \end{array}\right)$$
$$ \widehat{A}=\left(\begin{array}{c|ccccccc} 1 & 0 & \dots & 0 & 1& x^{(m)}_{\ell,1} &\dots&x^{(m)}_{\ell,r_m}   \\ \hline 0  & & \\ \vdots & & & & A &  & \\  0 \end{array}\right)$$
is a generator matrix of a $[\ell+r_m+1, \ell]$-MDS code. %(where $M_{(\ell)}$ denotes the $\ell$-th row of $M$).

The PMDS code can correct any $r_i$ erasures in the $i$-th block (simultaneously) plus $s=1$ additional erasure anywhere.
\end{thm}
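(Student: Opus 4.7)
The plan is to verify the two defining conditions of a PMDS code in turn, exploiting the hypotheses that the augmented matrices $\widehat B_i$ and $\widehat A$ generate MDS codes, together with the crucial fact that all $\ell$ rows of $M$ coincide with a single row vector $\mu$. For the first condition, I observe that for $i<m$ the restriction of $G$ to block $i$ has zero rows outside row-block $i$, so its row space coincides with that of $B_i$; and since $B_i$ is obtained from $\widehat B_i$ by deleting the all-ones column, it inherits the MDS property. For $i=m$, the restriction of $G$ to block $m$ is the stack $(M;\ldots;M;A)$, whose row space collapses to the row space of $\binom{\mu}{A}$. Direct inspection identifies $\binom{\mu}{A}$ with $\widehat A$ after removing its first column, so it generates an $[\ell+r_m,\ell]$-MDS code.

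For the second condition, write $v=(v_1,\ldots,v_{m-1},w)$ with $v_i\in\F_q^\ell$ and $w\in\F_q^{\ell-1}$, and let $\alpha_i:=\sum_{j}(v_i)_j$; then $c=vG$ equals $v_iB_i$ on block $i<m$ and $(\sum_i\alpha_i)\mu+wA$ on block $m$. Dimension preservation of the puncturing is direct: a codeword supported inside the erasure set $E$ has at most $r_i$ nonzero entries on each block, hence vanishes on each block by the first condition, forcing $v=0$. It then remains to exclude weight-$1$ codewords in the punctured code, which I would handle by a case split on the block $i^*$ containing the lone nonzero coordinate $j^*$.

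If $i^*<m$, invertibility of $B_i$ restricted to the $\ell$ surviving columns of block $i$ gives $v_i=0$ for all $i<m$ with $i\neq i^*$. Vanishing of $c$ on block $m$ then reads $\alpha_{i^*}\mu+wA=0$ on the $\ell$ surviving columns of block $m$; the MDS property of block $m$ (already established) implies that $\binom{\mu}{A}$ restricted to those columns is invertible, forcing $\alpha_{i^*}=0$ and $w=0$. Then $v_{i^*}(B_{i^*}\mid\mathbf{1}_\ell)$ would have weight $1$ on the $\ell+1$ columns consisting of the surviving columns of $B_{i^*}$ together with the all-ones column, contradicting the fact that this restriction of the MDS code generated by $\widehat B_{i^*}$ has minimum distance $2$. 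If instead $i^*=m$, the same argument on blocks $i<m$ forces $v_i=0$ for all $i<m$, so $c$ on block $m$ reduces to $wA$; shortening $\widehat A$ on its first coordinate identifies the row space of $A$ with an $[\ell+r_m,\ell-1]$-MDS code of minimum distance $r_m+2$, and puncturing $r_m$ coordinates leaves minimum distance at least $2$, ruling out weight $1$.

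The main difficulty lies in the second condition, where the ``repeated row'' structure of $M$ (which is responsible for the single global parity) and the local MDS structure of each block must be controlled simultaneously; the augmentations $\widehat B_i$ and $\widehat A$ are tailored exactly so that the needed MDS-ness propagates to the various punctured and shortened codes arising in the case analysis. The final erasure-correction claim then follows immediately from Lemma \ref{lem:PMDScap}.
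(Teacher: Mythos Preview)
Your argument is correct, and it reaches the same conclusion by a genuinely different route from the paper. The paper verifies the second PMDS condition by showing directly that every $(m\ell-1)\times(m\ell-1)$ minor of the punctured generator matrix is nonzero: using the block-triangular shape it observes that for the $m-1$ blocks contributing $\ell$ columns one may as well take the identity columns, and then the minor collapses either to a maximal minor of $A$ (when the short block is the last one) or to a minor of $\widehat B_i$ involving the all-ones column (when the short block is $i<m$), both of which are nonzero by the MDS hypotheses on $\widehat A$ and $\widehat B_i$.

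You instead exploit that a $[m\ell,m\ell-1]$ code is MDS precisely when it has no codeword of weight $1$, and you rule out such codewords via the parametrisation $c=vG$ and a case split on the block containing the putative nonzero coordinate. The crucial identifications you use---that the restriction of $\widehat B_i$ to any $\ell+1$ columns is a $[\ell+1,\ell]$-MDS code, and that the row space of $A$ is the shortening of the code generated by $\widehat A$ at its first coordinate, hence an $[\ell+r_m,\ell-1]$-MDS code of minimum distance $r_m+2$---are exactly the coding-theoretic counterparts of the determinant computations in the paper. Your approach has the advantage of avoiding explicit row and column manipulations and making the role of the repeated row $\mu$ in $M$ (which produces the single scalar $\sum_i\alpha_i$) very transparent; the paper's minor-based approach, on the other hand, makes the reduction to the MDS hypotheses on $\widehat B_i$ and $\widehat A$ visible at the level of submatrices and generalises more mechanically if one later wants to track which specific minors must be nonzero.
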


\begin{proof}
It is easy to see that the row space of each block is $\ell$-dimensional. For any $i \in \{1,\ldots, m-1 \}$ we have by construction that $\widehat{B}_i$ generates an MDS code, and therefore $B_i$ is the generator matrix of a $[\ell+r_i,\ell]$-MDS code. This implies the locality for the first $m-1$ blocks. Moreover, also the matrix 
%$\left(\begin{array}{c} M_{(\ell)} \\ A \end{array}\right)$
 $ \widehat{A}$ without the first column generates a $[\ell+r_i,\ell]$-MDS code, and this implies the locality for the last block.

We now check that after puncturing in $r_i$ coordinates per block $B_i$ the code is MDS. Recall that $k=m\ell-1$. For the maximal minors of the punctured generator matrix of size $k\times k$ we only have the choice of choosing $\ell$ columns from all but one blocks and choosing $\ell-1$ columns from the remaining block. Whenever we choose $\ell$ columns from a block, it suffices to compute all minors with only one choice of $\ell$ out of $\binom{\ell+r_i}{\ell}$ columns per block, since any choice of $\ell$ columns is a basis for the column space of $B_i$. For simplicity we always choose the first columns of a block. We distinguish two cases:

\underline{Case 1 ($\ell-1$ columns from last block):} The columns of the first $(m-1)$ blocks form an upper left $\ell(m-1)$-identity submatrix, hence the big minor is determined by the last $\ell-1$ columns, i.e., the maximal minors of $A$. These minors correspond to the maximal minors of $\widehat{A}$ containing the first column. Since $\widehat{A}$ generates an MDS code they are all non-zero.

\underline{Case 2 ($\ell-1$ columns from one of the first $m-1$ blocks): } We consider taking $\ell-1$ columns from the first block. The other blocks work analogously. After row and column operations the corresponding minors are of the form 
$$\pm \det\left(\begin{array}{c|cccc} I_{m\ell-1-t} & & & * \\\hline  &1& x^{(1)}_{j_1,h_1} & \dots & x^{(1)}_{j_1,h_{t-1}}   \\ 0 & \vdots & \vdots & & \vdots \\ &1& x^{(1)}_{j_t,h_1} & \dots & x^{(1)}_{j_t,h_{t-1}} \end{array}\right) $$ 
$$= \pm \det\left(\begin{array}{cccc}  x^{(1)}_{j_1,h_1} & \dots & x^{(1)}_{j_1,h_{t-1}}&1   \\  \vdots&& \vdots  & \vdots \\  x^{(1)}_{j_t,h_1} & \dots & x^{(1)}_{j_t,h_{t-1}} &1 \end{array}\right) , $$ 
%$$\pm \det\left(\begin{array}{c|cccc} I_{m\ell-s-1} & & & * \\\hline  &1& x^{(1)}_{j_1,h_1} & \dots & x^{(1)}_{j_1,h_{s-1}}   \\ 0 & \vdots & \vdots & & \vdots \\ &1& x^{(1)}_{j_s,h_1} & \dots & x^{(1)}_{j_s,h_{s-1}} \end{array}\right), $$ 
for some %$s\leq \min\{\ell,r_i+1 \}$, and 
$1\leq t \leq \ell$, $1\leq j_1 <j_2<\ldots <j_t\leq \ell$ and $1\leq h_1<h_2<\ldots<h_{t-1}\leq r_1$. They are all non-zero since the matrix $\widehat{B}_1$ is the generator matrix of an MDS code.
\end{proof}

\begin{rem}
If $G$ is of the form described in Theorem \ref{th:MRri},
then the conditions that $\widehat{B}_i$ and $\widehat{A}$ are generator matrices of MDS codes are also necessary for the generated code to be $[n,k,\ell; r_1,\dots,r_m]$-PMDS. We will show a more general result in Theorem \ref{th:classMRri}.
\end{rem}

To finalize the construction of PMDS codes with $s=1$ we need to show that we can always find the matrices $B_i, A, M$ from Theorem \ref{th:MRri}. For this we need the following lemma:

\begin{lem}\label{lem:equiv}
Let $\ell, r \in \mathbb N$, and $\alpha_1,\ldots, \alpha_{\ell}\in \F_q^*$. Then the matrix
$$G=\left(\quad I_{\ell}\quad\begin{array}{|cccc} x_{1,1} & x_{1,2} & \dots & x_{1,r}\\x_{2,1} & x_{2,2} & \dots & x_{2,r}\\ \vdots & \vdots & & \vdots  \\x_{\ell,1} & x_{\ell,2} & \dots & x_{\ell,r}
\end{array}\right)$$
generates a $[\ell+r, \ell]$-MDS code if and only if the matrix
$$\widehat{G}=\left(\quad I_{\ell} \quad\begin{array}{|cccc} \alpha_1x_{1,1} & \alpha_1x_{1,2} & \dots &\alpha_1 x_{1,r}\\\alpha_2x_{2,1} & \alpha_2x_{2,2} & \dots & \alpha_2x_{2,r}\\ \vdots & \vdots & & \vdots  \\\alpha_{\ell}x_{\ell,1} & \alpha_{\ell}x_{\ell,2} & \dots & \alpha_{\ell}x_{\ell,r} \end{array}\right) $$
generates  a $[\ell+r, \ell]$-MDS code.
\end{lem}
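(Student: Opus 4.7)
The strategy is to realize $\widehat{G}$ as $D\,G\,D'$ for suitable invertible diagonal matrices $D, D'$ and then observe that both transformations preserve the MDS property.

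Concretely, let $D = \mathrm{diag}(\alpha_1, \ldots, \alpha_\ell) \in \F_q^{\ell \times \ell}$ and let $D' \in \F_q^{(\ell+r) \times (\ell+r)}$ be the diagonal matrix whose first $\ell$ diagonal entries are $\alpha_1^{-1}, \ldots, \alpha_\ell^{-1}$ and whose remaining $r$ diagonal entries are $1$. First I would verify by direct computation that $\widehat{G} = D \, G \, D'$: left-multiplication by $D$ scales row $i$ of $G$ by $\alpha_i$, turning the identity block into $\mathrm{diag}(\alpha_1, \ldots, \alpha_\ell)$ and the right block into the matrix with entries $\alpha_i x_{i,j}$; right-multiplication by $D'$ then rescales the first $\ell$ columns by $\alpha_j^{-1}$, restoring the identity block and leaving the right block unchanged.

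Next I would note that the row space of $D\,G$ equals the row space of $G$, since $D$ is invertible; hence the codes generated by $G$ and by $D\,G$ coincide. Therefore the code generated by $\widehat{G} = (D\,G)\,D'$ is obtained from the code generated by $G$ by multiplying each coordinate by a nonzero scalar (the corresponding diagonal entry of $D'$). This is a monomial transformation, which preserves Hamming weights and in particular the minimum distance. Since the MDS property for an $[\ell+r, \ell]$ code is equivalent to having minimum distance $r+1$, it follows that $G$ generates an MDS code if and only if $\widehat{G}$ does.

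There is essentially no serious obstacle; the only delicate point is getting the factorization $\widehat{G} = D \, G \, D'$ correct, since one has to scale the identity part back after scaling the rows. The rest is the standard observation that MDS codes are closed under monomial equivalence. One could alternatively give a determinantal argument, observing that any $\ell \times \ell$ minor of $\widehat{G}$ equals the corresponding minor of $G$ times a nonzero product of some $\alpha_i$'s and $\alpha_j^{-1}$'s, so the minor of $\widehat{G}$ is nonzero precisely when the minor of $G$ is nonzero.
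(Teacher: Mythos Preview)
Your proof is correct, but the route differs from the paper's. The paper invokes the characterization (attributed to \cite{ro85}) that $(I_\ell \mid X)$ generates an MDS code if and only if $X$ is \emph{superregular}, i.e.\ every minor of $X$ is nonzero; it then checks directly that each minor of $X$ vanishes if and only if the corresponding minor of $\mathrm{diag}(\alpha_{i_1},\dots,\alpha_{i_t})\,X$ does. Your primary argument instead exhibits $\widehat{G}=D\,G\,D'$ with $D,D'$ invertible diagonal and appeals to monomial equivalence: left action by $D$ preserves the code, right action by $D'$ preserves Hamming weights. This is slightly more conceptual and avoids the external superregularity lemma entirely; the paper's determinantal computation, on the other hand, makes the nonvanishing of the relevant minors completely explicit, which is convenient later when specific minors are tracked in the proofs of Theorems~\ref{th:MRri} and~\ref{th:classMRri}. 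Your closing remark about $\ell\times\ell$ minors of $\widehat{G}$ versus $G$ is essentially the paper's argument phrased at the level of $G$ rather than $X$.
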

\begin{proof}
It was shown in \cite{ro85} that $G$ generates an MDS code if and only if 
$$X=\left(\begin{array}{cccc} x_{1,1} & x_{1,2} & \dots & x_{1,r}\\x_{2,1} & x_{2,2} & \dots & x_{2,r}\\ \vdots & \vdots & & \vdots  \\x_{\ell,1} & x_{\ell,2} & \dots & x_{\ell,r}\end{array}\right)$$
is superregular, i.e., if all minors of $X$ are non-zero. For any square submatrix of $X$ we have that
$$\det\left(\begin{array}{cccc} x_{i_1,j_1} & x_{i_1,j_2} & \dots & x_{i_1,j_t}\\x_{i_2,j_1} & x_{i_2,j_2} & \dots & x_{i_2,j_t}\\ \vdots & \vdots & & \vdots  \\x_{i_t,j_1} & x_{i_t,j_2} & \dots & x_{i_t,j_t}\end{array}\right) = 0 \iff $$
$$\det\left( \mathrm{diag}(\alpha_{i_1},\ldots, \alpha_{i_t})   \left(\begin{array}{cccc} x_{i_1,j_1} & x_{i_1,j_2} & \dots & x_{i_1,j_t}\\x_{i_2,j_1} & x_{i_2,j_2} & \dots & x_{i_2,j_t}\\ \vdots & \vdots & & \vdots  \\x_{i_t,j_1} & x_{i_t,j_2} & \dots & x_{i_t,j_t}\end{array}\right)  \right) =0$$
$$\iff\det\left(\begin{array}{cccc} \alpha_{i_1}x_{i_1,j_1} & \alpha_{i_1}x_{i_1,j_2} & \dots & \alpha_{i_1}x_{i_1,j_t}\\\alpha_{i_2}x_{i_2,j_1} & \alpha_{i_2}x_{i_2,j_2} & \dots & \alpha_{i_2}x_{i_2,j_t}\\ \vdots & \vdots & & \vdots  \\ \alpha_{i_t}x_{i_t,j_1} & \alpha_{i_t}x_{i_t,j_2} & \dots & \alpha_{i_t}x_{i_t,j_t}\end{array}\right) = 0 ,$$
which implies the statement.
\end{proof}

\begin{cor}\label{co:MRr}
\begin{enumerate}
\item
For any integers $m\geq 2 $ and $\ell,r_1,\dots,r_m \geq 1$ there exists a $[n,k=m\ell-1, \ell; r_1,\dots,r_m]$-PMDS code over any field $\F_q$ with $q\geq \max_i\{r_i\}+\ell$.

\item
If there exists $h\in \mathbb N$ such that $\ell\in \{3,2^h-1\}$ and $\max_i \{r_i\} +\ell = 2^h+1$, then there exists an  $[n,k=m\ell-1, \ell; r_1,\dots,r_m]$-PMDS code over $\F_q$ with $q=2^h= \max_i\{r_i\}+\ell -1$.

\item
If $\ell=1$, then there exists a $[n,k=m-1, 1; r_1,\dots,r_m]$-PMDS code over $\F_q$ with $q\geq 2$.
\end{enumerate}
\end{cor}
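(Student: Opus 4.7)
The plan is to realize, over the fields specified in each part, the matrices $B_i$, $A$, and $M$ required by Theorem \ref{th:MRri}. The only nontrivial constraints there are that $\widehat{B}_1, \dots, \widehat{B}_{m-1}$ and $\widehat{A}$ each generate an MDS code of dimension $\ell$, of lengths $\ell + r_i + 1$ (for $i < m$) and $\ell + r_m + 1$, respectively. Inspecting the block structures, each of these matrices is---up to a permutation of its columns, which preserves the MDS property---of the form $(I_\ell \mid \mathbf{1} \mid X)$ where $X$ carries the free parameters $x^{(i)}_{j,k}$. Hence the task reduces to producing, for each relevant $i$, an MDS code of the prescribed length whose systematic generator matrix has an all-ones column alongside the identity.

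For part (1), we start from any $[\ell + r_i + 1, \ell]$-MDS code over $\F_q$: such a code exists whenever $\ell + r_i + 1 \leq q+1$, that is $q \geq \ell + r_i$, via the standard extended Reed--Solomon construction, so simultaneously for all $i$ once $q \geq \max_i r_i + \ell$. We then put its generator matrix in systematic form $(I_\ell \mid M)$. Because the code is MDS, $M$ is superregular in the sense of \cite{ro85}, so in particular every entry in any fixed column of $M$ is nonzero. Pick the column of $M$ destined to become the all-ones column, say column $c$, and set $\alpha_j := M_{j,c}^{-1}$. By Lemma \ref{lem:equiv} the matrix $(I_\ell \mid \mathrm{diag}(\alpha_1,\dots,\alpha_\ell)\,M)$ still generates an MDS code, and its $c$-th non-identity column is now $\mathbf{1}$; the remaining entries supply the required $x^{(i)}_{j,k}$. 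Repeating the procedure with length $\ell + r_m + 1$ provides $\widehat{A}$.

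Part (2) will follow by exactly the same argument, but invoking the two exceptional cases of the MDS conjecture in characteristic $2$: $[2^h + 2, 3]$- and $[2^h + 2, 2^h - 1]$-MDS codes exist over $\F_{2^h}$, which matches the hypothesis $\ell \in \{3, 2^h - 1\}$ and $\max_i r_i + \ell = 2^h + 1$. In part (3), $\ell = 1$ collapses the construction so that every $\widehat{B}_i$ and $\widehat{A}$ is a single row, and we may simply choose all $x^{(i)}_{j,k} = 1$. These rows then become all-ones vectors, which generate the corresponding $[r_i + 2, 1]$- or $[r_m + 2, 1]$-MDS repetition codes over $\F_2$ and hence over any $\F_q$ with $q \geq 2$.

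The single step that requires care is forcing an all-ones column into an MDS generator matrix already presented in systematic form; this is precisely the content of Lemma \ref{lem:equiv}, so apart from invoking it the proof amounts to routine verification against the standard MDS existence results.
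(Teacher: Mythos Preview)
Your proof is correct and follows essentially the same route as the paper: invoke Theorem \ref{th:MRri}, reduce the task to producing $[\ell+r_i+1,\ell]$-MDS generator matrices with a prescribed all-ones column, and use Lemma \ref{lem:equiv} to normalize a column of a systematic (extended or doubly extended) Reed--Solomon generator. Your observation that both $\widehat{B}_i$ and $\widehat{A}$ are, up to a column permutation, of the common shape $(I_\ell \mid \mathbf{1} \mid X)$ is a slight streamlining over the paper, which displays two separate target forms. One small point you glossed over in part (2): the doubly extended Reed--Solomon code is only needed for the block(s) attaining $\max_i r_i$; for the shorter blocks you still need $[\ell+r_i+1,\ell]$-MDS codes over $\F_{2^h}$, and these come from ordinary extended Reed--Solomon codes since $\ell+r_i+1\le q+1$ there. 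The paper makes this explicit; it is worth a sentence in yours as well.
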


\begin{proof}
\begin{enumerate}
\item
Let $q\geq \max_i\{r_i\} +\ell$. Then we know that $[\ell +r_i+1,\ell ]$-MDS codes exist over $\F_q$, namely extended Reed-Solomon codes. Furthermore, we know that each $[\ell +r_i+1,\ell ]$-MDS code has a generator matrix in systematic form $(I_\ell \mid N)$, where $N\in (\F_q^*)^{\ell\times (r_i+1)}$. 
To prove the statement of the corollary, by using Theorem \ref{th:MRri}, we need to show that there exist $[\ell +r_i+1,\ell ]$-MDS codes over $\F_q$ with generator matrices of the form 
$$\left( \begin{array}{c|c|c} \hspace{1cm} \;& \hspace{1cm} \;&1 \\ &&1 \\ I_\ell &* & \vdots \\ &&1  \\ &&1 \end{array}\right) \textnormal{ and } \left(\begin{array}{c|ccc|c|c} 1 & 0 &\dots& 0 & 1   & *\\ \hline 0  & & &&& \\ \vdots &  & I_{\ell-1} &  & * &  *\\  0&&&&& \end{array}\right)$$
respectively. For both cases we can use Lemma \ref{lem:equiv} to transform any generator matrix of an extended Reed-Solomon code into the desired from.
Therefore we have shown that matrices $B_1,\dots,B_{m-1}, A, M$ fulfilling the conditions of Theorem \ref{th:MRri}, exist.

\item
For the second part we use the fact that $[n,k]$-MDS codes exist over a field of size $q=n-2$ if $n=2^h$ (for some $h\in \mathbb N$) and $k\in\{3,q-1\}$, namely doubly extended Reed-Solomon codes. Analogously to part 1.\ of this proof, we can then construct the required matrices fulfilling the conditions of Theorem \ref{th:MRri}. For this note that the matrices for the locality blocks, that are not of maximal length, can still be constructed from extended Reed-Solomon codes.

\item
For the third part note that
$$G= \left(\begin{array}{cccc|cccc|c|cccc|cccc} 1 & 1 &\dots& 1 &    0& 0&\dots&0 & \dots&   0& 0&\dots&0 &     1&1&\dots&1 \\  0  &0 &\dots &0&   1&1&\dots&1&  \dots&   0& 0&\dots&0 &    1&1&\dots&1 \\ \vdots &  &  & &&&&&\ddots &  &&&&&&&\vdots  \\  0&0&\dots&0&   0&0&\dots&0 &   \dots &    1&1&\dots&1&  1&1&\dots&1 \end{array}\right)$$
generates a $[n,m-1, 1; r_1,\dots,r_m]$-PMDS code over any finite field. 
\end{enumerate}
\end{proof}

\begin{ex}
The matrix
$$G= \left( \begin{array}{ccc|ccc} 1&0&1   & 0&1&1\\ 0&1&2& 0&1&1 \\\hline 0&0&0  &1&1&2   \end{array}  \right) $$
is a generator matrix (in systematic form) of a PMDS code with parameters $[n,k,\ell; r_1,r_2] = [6,3,2; 1,1]$  over $\F_3$.
One can easily check that the row space of each of the two blocks is a $[3,2]$-MDS code, and that each combination of two columns from each block forms a $[4,3]$-MDS code, i.e., it can correct $s=1$ additional erasure.
\end{ex}

%We conclude this section with an example where the field size is less than in any other known construction.

\begin{ex}
Let $\F_4=\{0,1,\alpha,\alpha+1\}$. The matrix
$$G= \left( \begin{array}{ccccc|cccc} 1&0&0   &1&1    &0&0&1&1  \\ 0&1&0&  \alpha+1&\alpha    &0&0&1&1\\ 0&0&1&  \alpha&\alpha+1   &0&0&1&1 \\\hline 0&0&0&0&0  &1&0&1&\alpha \\  0&0&0&0&0  &0&1&1&\alpha+1  \end{array}  \right) $$
is a generator matrix (in systematic form) of a PMDS code with parameters $[n,k,\ell; r_1,r_2] = [9,5,3; 2,1]$  over $\F_4$.
One can easily check that the row space of the first block is a $[5,3]$-MDS code,  the row space of the second block is a $[4,3]$-MDS code, and that each combination of three columns from each block forms a $[6,5]$-MDS code, i.e., it can correct $s=1$ additional erasure.
\end{ex}

%%%%%%%%%%%%%%%%%%%%%%%%%%%%%%%%%%%%%%%%%%%%%%%%%%%%%%%%%%%%%%%%%%%%%%%%%%%%%%%%%%%%%%%%% 

\section{Classification of all PMDS Codes with $s=1$}\label{sec:class}

In this section we give a complete classification of PMDS codes that can correct one additional erasure anywhere in the code, by determining the systematic form of their generator matrix. The main result in Theorem \ref{th:classMRri} also generalizes the construction of PMDS codes given in the previous section. 

\begin{lem}\label{lem:Stand}
Let $m\geq 2 $ and $\ell,r_1,\dots,r_m \geq 1$ and let $C$ be a $[n,k=m\ell-1, \ell; r_1,\dots,r_m]$-PMDS code over a field $\F_q$. Then $C$  has a generator matrix of the form
\begin{equation}\label{eq:prel}
 G= \left(\begin{array}{cccc|c} B_1 & 0 & \dots & 0 & M_1 \\  0 & B_2 & \dots & 0 & M_2 \\ 
\vdots & \vdots & \ddots & \vdots & \vdots \\ 0 & 0 & \dots & B_{m-1} & M_{m-1} \\\hline  0 & 0 & \dots & 0& A  \end{array}\right)
\end{equation}
where
\begin{equation}\label{eq:Bi}
 B_i=\left( \begin{array}{ccccccc} 1&0&\dots &0& x^{(i)}_{1,1}&\dots&x^{(i)}_{1,r_i} \\ 0&1&\dots&0& x^{(i)}_{2,1}&\dots&x^{(i)}_{2,r_i}\\ \vdots&&\ddots&&\vdots &  & \vdots\\ 0&0&\dots&1& x^{(i)}_{\ell,1}&\dots&x^{(i)}_{\ell,r_i} \\   \end{array}\right)       \quad   \in \F_q^{\ell\times (\ell+r_i)}    ,
\end{equation}
$$ A=\left( \begin{array}{cccccccc} 1&0&\dots &0&\alpha^{(m)}_1&  \alpha^{(m)}_1x^{(m)}_{1,1}&\dots&\alpha^{(m)}_1x^{(m)}_{1,r_m} \\ 0&1&\dots&0&\alpha^{(m)}_2&  \alpha^{(m)}_2 x^{(m)}_{2,1}&\dots&\alpha^{(m)}_2 x^{(m)}_{2,r_m} \\ \vdots&&\ddots&&\vdots&\vdots & & \vdots \\ 0&0&\dots&1&\alpha^{(m)}_{\ell-1}&  \alpha^{(m)}_{\ell-1}x^{(m)}_{\ell-1,1}&\dots&\alpha^{(m)}_{\ell-1}x^{(m)}_{\ell-1,r_m}   \end{array}\right)       \quad   \in \F_q^{(\ell-1)\times (\ell+r_m)}  ,$$
$$ M_i=\left( \begin{array}{cccccccc} 0&0&\dots &0&\alpha^{(i)}_{1}& \alpha^{(i)}_{1}x^{(m)}_{\ell,1} &\dots&\alpha^{(i)}_{1}x^{(m)}_{\ell,r_m} \\ 0&0&\dots &0&\alpha^{(i)}_2& \alpha^{(i)}_{2}x^{(m)}_{\ell,1}&\dots&\alpha^{(i)}_{2}x^{(m)}_{\ell,r_m}  \\ \vdots&&\ddots&&\vdots&\vdots & &\vdots \\  0&0&\dots &0&\alpha^{(i)}_{\ell}&\alpha^{(i)}_{\ell} x^{(m)}_{\ell,1} &\dots&\alpha^{(i)}_{\ell}x^{(m)}_{\ell,r_m}  \end{array}\right)       \quad   \in \F_q^{\ell\times (\ell+r_m)},$$
up to permutation of variables.
\end{lem}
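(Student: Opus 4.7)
The plan is to identify a canonical basis of $C$ realizing the block-triangular shape in (\ref{eq:prel}) and then to use a puncturing argument that forces the rank-one structure of the $M_i$ and the factorization appearing in $A$. For each $i$, let $V_i$ denote the projection of $C$ onto block $i$ (a $[\ell+r_i,\ell]$-MDS code by the PMDS hypothesis) and let $D_i := \{c\in C : c|_{\text{block } j}=0 \text{ for all } j\ne i\}$ be the subspace of codewords supported on block $i$ only. A standard dimension count gives $\dim D_i = \ell-1$: the PMDS hypothesis guarantees that after puncturing $r_j$ coordinates per block we obtain $\widetilde{C}$, an $[m\ell, m\ell-1]$-MDS code of minimum distance $2$, so its projection onto any $(m-1)\ell$ coordinates is surjective; hence the projection of $C$ onto the coordinates outside block $i$ has dimension at least $(m-1)\ell$, while the containment of this projection in $\prod_{j\ne i}V_j$ yields the matching upper bound.

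The key observation is that $D_i$ contains no codeword of weight $r_i+1$. Indeed, such a $\gamma$ would be supported on $r_i+1$ coordinates of block $i$, and puncturing $r_i$ of these together with arbitrary $r_j$ coordinates in each other block would send $\gamma$ to a weight-$1$ codeword of $\widetilde{C}$, contradicting its minimum distance. Next, after possibly permuting coordinates within each block, let $I$ consist of the first $\ell$ coordinates of blocks $1,\ldots,m-1$ together with the first $\ell-1$ coordinates of block $m$. Since $|I|=m\ell-1$ and every coordinate of $I$ survives the puncturing, $I$ is an information set of $\widetilde{C}$ (hence of $C$) by MDS. Put $G$ in systematic form on $I$, yielding $m-1$ groups of $\ell$ rows indexed by blocks $1,\ldots,m-1$ followed by a final group of $\ell-1$ rows.

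For each $i<m$ and $j\in\{1,\ldots,\ell\}$, the $j$-th row of the $i$-th group has $e_j$ on the first $\ell$ coordinates of block $i$ and vanishes on the first $\ell$ coordinates of every other block $i'<m$; MDS of $V_{i'}$ then forces its block-$i'$ part to vanish entirely, while MDS of $V_i$ identifies its block-$i$ part with $(e_j\mid x^{(i)}_{j,*})$, producing $B_i$. On block $m$ the first $\ell-1$ entries are zero, and the codewords of $V_m$ with this property form a one-dimensional subspace spanned by a fixed vector $w=(0,\ldots,0,1,x^{(m)}_{\ell,1},\ldots,x^{(m)}_{\ell,r_m})$ (the $\ell$-th basis vector of $V_m$ in systematic form on its first $\ell$ coordinates), so the block-$m$ part equals $\alpha^{(i)}_j w$ for some scalar $\alpha^{(i)}_j$, which matches the $j$-th row of $M_i$. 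For the final group of rows, the same MDS argument places each row in $D_m$, and on block $m$ the $j$-th row equals $\beta_j+\alpha^{(m)}_j w$ where $\beta_j$ is the $j$-th systematic-form basis vector of $V_m$ and $\alpha^{(m)}_j$ is the (a priori free) value in coordinate $\ell$. The key observation now enters: if $\alpha^{(m)}_j$ were zero the row would reduce to $\beta_j$, a weight-$(r_m+1)$ codeword of $V_m$ lying in $D_m$, which is impossible. Hence $\alpha^{(m)}_j\ne 0$ and the last $r_m+1$ entries of the row factor as $\alpha^{(m)}_j(1,x^{(m)}_{j,1},\ldots,x^{(m)}_{j,r_m})$ for suitable scalars, matching the $j$-th row of $A$.

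The main obstacle is the key observation ruling out minimum-weight codewords in $D_m$: without it some $\alpha^{(m)}_j$ could vanish while the corresponding row of $A$ retained a nonzero tail (namely that of $\beta_j$), and small examples show that for such a hypothetical $D_m$ no permutation of the columns of block $m$ could repair the situation. Everything else is routine dimension counting together with repeated applications of the MDS property of the $V_i$.
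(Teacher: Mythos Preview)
Your argument is correct and follows essentially the same route as the paper's proof: both put the generator matrix into systematic form on the first $\ell$ coordinates of each block (an information set because the punctured code is $[m\ell,m\ell-1]$-MDS), then use the rank-$\ell$ constraint on each block to force the block-diagonal shape and the rank-one structure of the $M_i$. The only cosmetic difference is that you derive $\alpha^{(m)}_j\neq 0$ via your ``key observation'' on weights in $D_m$, whereas the paper reads it off directly as an entry of the all-nonzero last column of the systematic $[m\ell,m\ell-1]$-MDS generator; these are the same fact (minimum distance $\ge 2$) in different clothing, and your dimension count $\dim D_i=\ell-1$, while correct, is not actually needed for the argument.
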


\begin{proof}
 Let $C$ be a $[n,k=m\ell-1, \ell; r_1,\dots,r_m]$-PMDS code with generator matrix 
$$\widetilde{G}= \left(\widetilde{B}_1 \mid \dots \mid \widetilde{B}_m \right), $$
where $\widetilde{B}_i=(\widetilde{C}_i \mid \widetilde{D}_i)$, $\widetilde{C}_i \in \F_q^{(m\ell -1) \times \ell}$ and $\widetilde{D}_i \in \F_q^{(m\ell -1)\times r_i}$ for $i=1,\ldots, m$. By definition of PMDS codes, the code generated by $\widetilde{G}_C=\left(\widetilde{C}_1 \mid \dots \mid \widetilde{C}_m \right)$ is a $[m\ell , m\ell -1]$-MDS code, since it is obtained by puncturing the code $C$ in the coordinates defined by the blocks $\widetilde{D}_i$. Therefore there exists a matrix $S \in \mathrm{GL}_{m\ell -1}(\F_q)$ such that 
\begin{equation}\label{eq:GC}S\widetilde{G}_C=\left( I_{m\ell -1}\begin{array}{|c} \alpha_1\\ \vdots \\ \alpha_{m\ell -1} \end{array}\right)\in \F_q^{(m\ell -1)\times m\ell }, \end{equation}
where $\alpha_i\neq 0$ for every $i=1,\ldots,m\ell -1$.
Hence the generator matrix  $G:=S\widetilde{G}$ will be of the form
$$ G= \left(\begin{array}{cc|cc|c|cc|cc} 
I_{\ell} & &  0 & & \dots & 0 &  & \\ 
 0 & & I_{\ell} & &\dots & 0 & & \\ 
\vdots & S\widetilde{D}_1 & \vdots &S\widetilde{D}_1 &  & \vdots & S\widetilde{D}_{m-1}& S\widetilde{C}_{m} & S\widetilde{D}_{m} \\
 0 &  & 0 & &\dots & I_{\ell} & &  \\  
0 & & 0 & & \dots & 0& &  \end{array}\right)$$
Denote by $\mathrm{rs}(B_i)$ and $\mathrm{cs}(B_i)$ the row space and the column space of the matrix $B_i$, respectively.  
By definition of PMDS codes, we have that, for $i=1,\ldots, m-1$,
$$\ell=\dim\mathrm{rs}(\widetilde{B}_i)=\dim\mathrm{cs}(\widetilde{B}_i)=\dim\mathrm{cs}(S\widetilde{B}_i),$$
%, i.e. the $\F_q$-vector space generated by the rows (respectively, the columns) of the matrix $N$.}
Hence $\mathrm{cs}(S\widetilde{D}_i)\subseteq \mathrm{cs}(S\widetilde{C}_i)$, and this implies that the matrix 
$S\widetilde{G}_C$ is of the form
$$\left(\begin{array}{cccc|cc} B_1 & 0 & \dots & 0 &  \\  0 & B_2 & \dots & 0 &  \\ 
\vdots & \vdots & \ddots & \vdots & S\widetilde{C}_m & S\widetilde{D}_m \\ 0 & 0 & \dots & B_{m-1} &  \\  0 & 0 & \dots & 0&  \end{array}\right),$$
where every block $B_i$ is of the form (\ref{eq:Bi}).

It remains to show that the last block is of the desired form.
By (\ref{eq:GC}), the last block is of the form
$$\left(S\widetilde{C}_m \mid S\widetilde{D}_m\right)=\left(\begin{array}{cccc|cccc}
0 & \dots & 0 & \alpha_1 & \multicolumn{3}{c}{\multirow{4}{*}{\qquad \LARGE{$X_1$}\quad}} \\
0 & \dots & 0 & \alpha_2 \\
\vdots & &\vdots &  \vdots  \\
0&\dots &0 & \alpha_{\ell(m-1)-1} \\ \hline
0 & \dots & 0 & \alpha_{\ell(m-1)} &\multicolumn{3}{c}{\multirow{4}{*}{\qquad \LARGE{$X_2$}\quad}} \\ 
\multicolumn{3}{c}{\multirow{3}{*}{\LARGE{$I_{\ell-1}$}}}& \alpha_{\ell(m-1)+1}  &  \\
 & & &  \vdots  \\
& & & \alpha_{m\ell-1} \\
 \end{array} \right)=:\Large{\left(\begin{array}{c|c}Y_1 & X_1 \\ \hline Y_2 & X_2 \end{array}\right)}.$$
Since $C$ is PMDS we get
$$\ell=\dim \mathrm{rs}(S\widetilde{B}_m)=\dim \mathrm{rs}(S\widetilde{C}_m\mid S\widetilde{D}_m )=\dim \left(\mathrm{rs}(Y_1 \mid X_1)+\mathrm{rs}(Y_2 \mid X_2)\right).$$
We observe that $\det(Y_2)=\alpha_{\ell(m-1)}\neq 0$, hence $\dim\mathrm{rs}(Y_2 \mid X_2)=\ell$, which implies 
\begin{equation*}\label{eq:rs}\mathrm{rs}(Y_1 \mid X_1)\subseteq \mathrm{rs}(Y_2 \mid X_2). \end{equation*} 
This implies, by the structure of the matrix $S\widetilde{B}_m$, that every row of $(Y_1\mid X_1)$ is a multiple of the first row of $(Y_2\mid X_2)$.
Since $X_2$ is arbitrary and $\alpha_i\neq 0$ for every $i$, we can write
%We are now free to choose the entries of $X_2$, i.e.,
$$X_2=\left(\begin{array}{cccc} \alpha_{\ell(m-1)}x^{(m)}_{\ell,1}&\dots&\alpha_{\ell(m-1)}x^{(m)}_{\ell,r_m} \\   \alpha_{\ell(m-1)+1} x^{(m)}_{1,1}&\dots&\alpha_{\ell(m-1)+1} x^{(m)}_{1,r_m} \\ \vdots & & \vdots \\  \alpha_{m\ell-1}x^{(m)}_{\ell-1,1}&\dots&\alpha_{m\ell-1}x^{(m)}_{\ell-1,r_m}   \end{array}\right),  $$
for some $x_{i,j}^{(m)} \in \F_q$. Therefore, also the last block is of the desired form.
\end{proof}

We can finally give a characterization of PMDS codes with one global parity (i.e., $s=1$):

\begin{thm}\label{th:classMRri}
For any $m\geq 2 $ and $ \ell, r_1,\dots,r_m \geq 1$, a linear code over $\F_q$ of length $n=m\ell + \sum_{i=1}^m r_i$ and dimension $k=m\ell -1$ % and locality $\ell$ over $\F_q$
 is a $[n,k,\ell; r_1,\dots,r_m]$-PMDS code \emph{if and only if} it has a generator matrix of the form
\begin{equation}\label{eq:Stand} 
G= \left(\begin{array}{cccc|c} B_1 & 0 & \dots & 0 & M_1 \\  0 & B_2 & \dots & 0 & M_2 \\ 
\vdots & \vdots & \ddots & \vdots & \vdots \\ 0 & 0 & \dots & B_{m-1} & M_{m-1} \\\hline  0 & 0 & \dots & 0& A  \end{array}\right)\end{equation}
where
$$ B_i=\left( \begin{array}{ccccccc} 1&0&\dots &0& x^{(i)}_{1,1}&\dots&x^{(i)}_{1,r_i} \\ 0&1&\dots&0& x^{(i)}_{2,1}&\dots&x^{(i)}_{2,r_i}\\ \vdots&&\ddots&&\vdots &  & \vdots\\ 0&0&\dots&1& x^{(i)}_{\ell,1}&\dots&x^{(i)}_{\ell,r_i} \\   \end{array}\right)       \quad   \in \F_q^{\ell\times (\ell+r_i)}   ,$$
$$ A=\left( \begin{array}{cccccccc} 1&0&\dots &0&\alpha^{(m)}_1&  \alpha^{(m)}_1x^{(m)}_{1,1}&\dots&\alpha^{(m)}_1x^{(m)}_{1,r_m} \\ 0&1&\dots&0&\alpha^{(m)}_2&  \alpha^{(m)}_2 x^{(m)}_{2,1}&\dots&\alpha^{(m)}_2 x^{(m)}_{2,r_m} \\ \vdots&&\ddots&&\vdots&\vdots & & \vdots \\ 0&0&\dots&1&\alpha^{(m)}_{\ell-1}&  \alpha^{(m)}_{\ell-1}x^{(m)}_{\ell-1,1}&\dots&\alpha^{(m)}_{\ell-1}x^{(m)}_{\ell-1,r_m}   \end{array}\right)       \quad   \in \F_q^{(\ell-1)\times (\ell+r_m)}    ,$$
$$ M_i=\left( \begin{array}{cccccccc} 0&0&\dots &0&\alpha^{(i)}_{1}& \alpha^{(i)}_{1}x^{(m)}_{\ell,1} &\dots&\alpha^{(i)}_{1}x^{(m)}_{\ell,r_m} \\ 0&0&\dots &0&\alpha^{(i)}_2& \alpha^{(i)}_{2}x^{(m)}_{\ell,1}&\dots&\alpha^{(i)}_{2}x^{(m)}_{\ell,r_m}  \\ \vdots&&\ddots&&\vdots&\vdots & &\vdots \\  0&0&\dots &0&\alpha^{(i)}_{\ell}&\alpha^{(i)}_{\ell} x^{(m)}_{\ell,1} &\dots&\alpha^{(i)}_{\ell}x^{(m)}_{\ell,r_m}  \end{array}\right)       \quad   \in \F_q^{\ell\times (\ell+r_m)}   , $$
such that $\alpha_j^{(i)} \neq 0$ for any $i,j$, and, for $i=1,\ldots, m-1$, the matrices
$$ \widehat{B}_i=\left(\quad B_i \quad \begin{array}{|c} \alpha^{(i)}_1 \\ \alpha^{(i)}_2 \\ \vdots \\ \alpha^{(i)}_{\ell}  \end{array}\right) ,$$
are generator matrices of a $[\ell+r_i+1, \ell]$-MDS code and
$$ \widehat{A}=\left( \begin{array}{cccccccc} 1&0&\dots &0&1&  x^{(m)}_{\ell,1}&\dots&x^{(m)}_{\ell,r_m} \\ 0&1&\dots&0&1&  x^{(m)}_{1,1}&\dots&x^{(m)}_{1,r_m} \\ \vdots&&\ddots&&\vdots&\vdots & & \vdots \\ 0&0&\dots&1&1&  x^{(m)}_{\ell-1,1}&\dots&x^{(m)}_{\ell-1,r_m}   \end{array}\right) $$
is a generator matrix of a $[\ell+r_m+1, \ell]$-MDS code.

%The PMDS code can correct $r_i$ erasures in the $i$-th block plus one additional erasure anywhere.
\end{thm}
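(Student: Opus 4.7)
The proof splits into the ``if'' and ``only if'' directions. For the ``only if'' direction, Lemma~\ref{lem:Stand} already delivers a generator matrix of $C$ in the form~\eqref{eq:Stand} with all $\alpha_j^{(i)}$ nonzero, so only the MDS property of $\widehat{B}_i$ and $\widehat{A}$ remains to be established. For the ``if'' direction, my plan is to reduce to Theorem~\ref{th:MRri} via a rescaling argument.

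For the ``if'' direction, I first scale row $j$ of block $i$ by $1/\alpha_j^{(i)}$ (for $i < m$) and row $j$ of $A$ by $1/\alpha_j^{(m)}$; this preserves the code, makes every $M_i$ have all rows equal to $(0,\ldots,0,1,x^{(m)}_{\ell,1},\ldots,x^{(m)}_{\ell,r_m})$, and normalizes $A$'s $\alpha$-column into an all-ones column. I then column-scale each block's identity columns by the appropriate $\alpha_j^{(i)}$ to restore the $I_\ell$ pattern; column scaling produces a monomially equivalent code, under which the PMDS property is preserved. The transformed matrix now fits exactly into the form of Theorem~\ref{th:MRri}, and Lemma~\ref{lem:equiv} shows that the MDS hypotheses on $\widehat{B}_i$ and $\widehat{A}$ transfer to the corresponding MDS hypotheses of Theorem~\ref{th:MRri}. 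That theorem then certifies the rescaled code, and hence $C$, as PMDS.

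For the ``only if'' direction, I split each $\ell \times \ell$ minor of $\widehat{B}_i$ and of $\widehat{A}$ according to whether the augmenting column is included. Minors omitting the augmenting column are minors of $B_i$ (respectively, of a row-column scaling of the generator of the $m$th locality code) and are nonzero by the locality MDS property of the corresponding block. For a minor of $\widehat{B}_i$ that includes the $\alpha^{(i)}$-column, with its other columns indexed by $L'$, I construct a puncturing of $C$ that keeps $L'$ plus one ``extra'' column in block~$i$ (which exists because $|L'| = \ell-1 < \ell + r_i$) and erases all data columns of every other block. The $k\times k$ minor obtained by additionally dropping the extra column becomes, after column operations using the identity columns of the other blocks to eliminate the off-block-$i$ entries of the $\alpha$-column, block-diagonal with identity blocks for every row block except block~$i$. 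The remaining block-$i$ submatrix is exactly the targeted $\ell\times\ell$ minor of $\widehat{B}_i$, so the PMDS property (which forces the $k\times k$ minor to be nonzero) forces the $\widehat{B}_i$-minor to be nonzero as well. For a minor of $\widehat{A}$ that includes the first column, Laplace expansion reduces the question to the nonvanishing of every $(\ell-1)\times(\ell-1)$ minor of $A$; these correspond to the ``Case~1'' minors of the punctured code analyzed in the proof of Theorem~\ref{th:MRri} and are again forced to be nonzero by PMDS.

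The main obstacle is the bookkeeping in the only-if direction: the $\alpha$-column of block~$m$ has entries in every row block of $G$, so the column operations must carefully zero out these entries in every block except the one housing the targeted submatrix. Once the operations are correctly set up, the block-diagonal factorization of the $k \times k$ determinant isolates the desired $\widehat{B}_i$- or $\widehat{A}$-minor cleanly, and the conclusion reduces directly to the PMDS property.
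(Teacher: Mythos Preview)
Your proposal is correct and follows essentially the same route as the paper's proof. For the ``only if'' direction you invoke Lemma~\ref{lem:Stand} and then isolate the required $\widehat{B}_i$- and $\widehat{A}$-minors inside specific $k\times k$ minors of a punctured generator matrix, exactly as the paper does (your block-diagonalization via column operations is the same reduction the paper expresses as ``$(\bar B_1\mid\bar\alpha^{(1)})$ invertible''; your Laplace step for $\widehat{A}$ reproduces the paper's two-step argument leading to~\eqref{eq:h} followed by Lemma~\ref{lem:equiv}). For the ``if'' direction the paper simply says to rerun the proof of Theorem~\ref{th:MRri} using Lemma~\ref{lem:equiv}, whereas you package this as an explicit monomial reduction to Theorem~\ref{th:MRri}; these are the same argument in slightly different clothing.
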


\begin{proof}
For the \emph{only if}-direction let $C$ be a $[n,k,\ell; r_1,\dots,r_m]$-PMDS code. By Lemma \ref{lem:Stand}, $C$ has a generator matrix of the form \eqref{eq:prel}. 
%Moreover, using exactly the same argument of the proof of Theorem \ref{th:MRri}, we get that a matrix of this form generates a PMDS code if and only  if  for $i=1,\ldots, m-1$ the matrix $\widehat{B}_i$ is a  generator matrix of a $[\ell+r_i+1, \ell]$-MDS code and the matrix 
Since $C$ is PMDS we must have that $B_i$ generates a $[\ell + r_i, \ell]$-MDS code. Moreover, any $(m\ell-1)$-minor of $G$ with $\ell$ columns from all but one block and $\ell -1$ columns from the remaining block must be non-zero. In particular, the  matrix 
$$ \left(\begin{array}{cccc|cc} \bar B_1 & 0 & \dots & 0 & 0& \bar\alpha^{(1)} \\  0 & \bar B_2 & \dots & 0 &  0& \bar\alpha^{(2)}\\ 
\vdots & \vdots & \ddots & \vdots & \vdots & \vdots \\ 0 & 0 & \dots &\bar B_{m-1} &  0& \bar\alpha^{(m-1)} \\\hline  0 & 0 & \dots & 0& I_{\ell-1} & \bar\alpha^{(m)}  \end{array}\right),$$
where $\bar B_1$ is any $\ell \times (\ell-1)$-submatrix of $B_1$, $\bar B_i$ is any  $\ell \times \ell$-submatrix of $B_i$ for $i=2,\dots, m-1$, and $\bar \alpha^{(j)} = ( \alpha^{(j)}_1, \dots,  \alpha^{(j)}_\ell)^\top$ for $j=1,\dots, m-1$, $\bar \alpha^{(m)} = ( \alpha^{(m)}_1, \dots,  \alpha^{(m)}_{\ell-1})^\top$,
is invertible, which is equivalent to 
$ \left(\begin{array}{c|c} \bar B_1 &  \bar\alpha^{(1)}  \end{array}\right)$
being invertible. This implies that $\hat B_1$ generates a $[\ell + r_1+1, \ell]$-MDS code. Analogously, $\hat B_2,\dots, \hat B_{m-1}$ generate MDS codes.

Since the last block forms a $[\ell + r_m, \ell]$- MDS code, we get that all maximal minors of 
\begin{equation*}
\left( \begin{array}{cccccccc} 
0&\dots &0&\alpha^{(m-1)}_\ell&  \alpha^{(m-1)}_\ell x^{(m)}_{\ell,1}&\dots&\alpha^{(m-1)}_\ell x^{(m)}_{\ell,r_m} \\ 
1&\dots&0&\alpha^{(m)}_1&  \alpha^{(m)}_1 x^{(m)}_{1,1}&\dots&\alpha^{(m)}_1 x^{(m)}_{1,r_m} \\ 
&\ddots&&\vdots&\vdots & & \vdots \\ 
0&\dots&1&\alpha^{(m)}_{\ell-1}&  \alpha^{(m)}_{\ell-1}x^{(m)}_{\ell-1,1}&\dots&\alpha^{(m)}_{\ell-1}x^{(m)}_{\ell-1,r_m}  
 \end{array}\right) 
\end{equation*}
are non-zero. Furthermore, all minors with the first $\ell$ columns of the first $m-1$ blocks and $\ell-1$ columns from the last block are non-zero, which implies that all maximal minors of $A$ are non-zero. 
It follows that 
\begin{equation}\label{eq:h}
\left( \begin{array}{cccccccc} 
1&0&\dots &0&\alpha^{(m-1)}_\ell&  \alpha^{(m-1)}_\ell x^{(m)}_{\ell,1}&\dots&\alpha^{(m-1)}_\ell x^{(m)}_{\ell,r_m} \\ 
0&1&\dots&0&\alpha^{(m)}_1&  \alpha^{(m)}_1 x^{(m)}_{1,1}&\dots&\alpha^{(m)}_1 x^{(m)}_{1,r_m} \\ 
\vdots&&\ddots&&\vdots&\vdots & & \vdots \\ 
0&0&\dots&1&\alpha^{(m)}_{\ell-1}&  \alpha^{(m)}_{\ell-1}x^{(m)}_{\ell-1,1}&\dots&\alpha^{(m)}_{\ell-1}x^{(m)}_{\ell-1,r_m}  
 \end{array}\right) 
\end{equation}
is the generator matrix of a $[\ell+r_m+1, \ell]$-MDS code.  By Lemma \ref{lem:equiv}, this last condition is equivalent to the condition that $\widehat{A}$ is the generator matrix of a $[\ell+r_m+1, \ell]$-MDS code.

The \emph{if}-direction can be shown analogously to the proof of Theorem~\ref{th:MRri}, again using Lemma \ref{lem:equiv} for the equivalence of the MDS property of \eqref{eq:h} and $\hat A$.
\end{proof}

We can now state the counterpart to Corollary \ref{co:MRr}, showing that one cannot construct PMDS codes with $s=1$
 over smaller fields.
\begin{cor}\label{co:MRr2}
Assuming that the MDS-conjecture (Conjecture \ref{MDSconj}) is correct, we have:
\begin{enumerate}
\item
If there exists a $[n,k=m\ell-1, \ell; r_1,\dots,r_m]$-PMDS code over $\F_q$ such that $\ell\in \{3,2^h-1\}$ and $\max_i \{r_i\} +\ell = 2^h+1$ (for some $h>1$), then $q\geq 2^h= \max_i\{r_i\}+\ell -1$.
\item
If there exists a $[n,k=m\ell-1, \ell; r_1,\dots,r_m]$-PMDS code over $\F_q$ such that the parameters are not included in Case 1 and $\ell >1$, then $q\geq \max_i\{r_i\}+\ell $.
\end{enumerate}
\end{cor}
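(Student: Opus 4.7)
The plan is to combine the classification of Theorem~\ref{th:classMRri} with the MDS conjecture to bound the field size from below. By Theorem~\ref{th:classMRri}, any $[n, m\ell - 1, \ell; r_1, \ldots, r_m]$-PMDS code over $\F_q$ has a generator matrix of the standard form given there, in which $\widehat{B}_i$ is the generator matrix of a $[\ell + r_i + 1, \ell]$-MDS code for $i = 1, \ldots, m-1$, and $\widehat{A}$ is the generator matrix of a $[\ell + r_m + 1, \ell]$-MDS code. Setting $r^* := \max_i\{r_i\}$, one of these matrices yields a $[\ell + r^* + 1, \ell]$-MDS code over $\F_q$. Since $\ell > 1$ and $r^* \geq 1$, we have $1 < \ell < \ell + r^*$, so the hypotheses of Conjecture~\ref{MDSconj} are satisfied.

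Applying the MDS conjecture to this code leaves two possibilities: either the ordinary bound $\ell + r^* + 1 \leq q + 1$ holds (giving $q \geq \ell + r^*$), or one lies in the exceptional regime $q = 2^h$ with $\ell \in \{3, q - 1\} = \{3, 2^h - 1\}$ and $\ell + r^* + 1 \leq q + 2$ (giving $q \geq \ell + r^* - 1$). In Case~1, where $\ell \in \{3, 2^h - 1\}$ and $\max_i\{r_i\} + \ell = 2^h + 1$, the ordinary alternative yields $q \geq 2^h + 1$, while the exceptional one yields $q \geq 2^h$; in either case $q \geq 2^h = \max_i\{r_i\} + \ell - 1$, as required. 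In Case~2, where the parameters are not of the Case~1 form and $\ell > 1$, the ordinary alternative directly gives $q \geq \ell + r^*$, so it remains only to exclude the exceptional alternative from producing a smaller value. Under that alternative $\ell \in \{3, 2^h - 1\}$ and $\ell + r^* \leq 2^h + 1$; equality here would coincide with the Case~1 parameters, which is excluded, hence $\ell + r^* \leq 2^h = q$, again giving $q \geq \ell + r^*$.

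The hard part will be purely bookkeeping: I must verify that the exceptional regime of the MDS conjecture matches Case~1 of the corollary closely enough that, in Case~2, the exceptional regime cannot drop $q$ below $\ell + r^*$. Once this matching is confirmed, the whole argument reduces to applying Theorem~\ref{th:classMRri} to extract the embedded $[\ell + r^* + 1, \ell]$-MDS code and then invoking the MDS conjecture.
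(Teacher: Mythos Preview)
Your proposal is correct and follows exactly the approach the paper intends: the corollary is stated without proof because it is meant to be an immediate consequence of Theorem~\ref{th:classMRri} (which furnishes the $[\ell+r_i+1,\ell]$-MDS codes) together with the MDS conjecture, and that is precisely what you do. Your case analysis ruling out the exceptional regime in Case~2 is the only thing requiring care, and you handle it correctly; you may want to rewrite the third paragraph in past rather than future tense, since the verification it promises has in fact already been carried out in the second paragraph.
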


%%%%%%%%%%%%%%%%%%%%%%%%%%%%%%%%%%%%%%%%%%%%%%%%%%%%%%%%%%%%%%%%%%%%%%%%%%%%%%%%%%%%%%%%%%%%%%%%%%%%%%%%%%%%%%%%

\section{Decoding of PMDS Codes with $s=1$}\label{sec:decoding}

In this section we investigate decoding of PMDS codes with $s=1$. We will first give a general decoding algorithm for any such code, based on solving a linear system of equations arising from the parity check matrix of the code. Then we will comment on the special case that the block MDS codes are Reed-Solomon codes.

In the case that we have only $s=1$ additional erasure, there are $m-1$ blocks that have at most as many erasures as the erasure correction capability of the block MDS code. Therefore, one can use any suitable decoding algorithm for the block MDS code and decode each of these $m-1$ blocks separately. In the last block, which contains one erasure more than correctable by the code of the block, we need to use the additional parity from the PMDS property. With this we get the following result:

\begin{thm}
Let $C$ be $[n,k=m\ell-1, \ell; r_1,\dots,r_m]$-PMDS code over $\F_q$ and let $r\in \F_q^n$ be a received word that is decodable in $C$. Then the original codeword can be recovered from $r$ with a complexity of $O(m \max_i\{r_i\}^3)$ operations over $\F_q$.
\end{thm}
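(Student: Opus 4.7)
The plan is to exploit the PMDS structure to decompose the decoding problem into small per-block linear systems, so that each block contributes only $O(\max_i r_i^3)$ operations. Since $s=1$, a decodable erasure pattern can contain at most $r_j$ erasures in each of at least $m-1$ blocks, with at most one exceptional block $j^*$ carrying $r_{j^*}+1$ erasures (the case where no block is exceptional is easier and subsumed). The first step is to scan the erasure positions (linear in $n$, hence negligible) and identify $j^*$, if it exists.

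Next I would decode every non-exceptional block $j\neq j^*$ locally. Because the row space of $B_j$ is an $[\ell+r_j,\ell]$-MDS code (Definition of PMDS and Lemma \ref{lem:PMDScap}), the local parity check matrix has $r_j$ rows, and the $r_j$ (or fewer) erasures yield a square linear system of size at most $r_j$ that is guaranteed invertible because the code is MDS. Gaussian elimination solves it in $O(r_j^3)$ operations, for a total of $\sum_{j\neq j^*}O(r_j^3)=O(m\max_i r_i^3)$ for this stage.

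For the exceptional block $j^*$, after the other blocks have been recovered their contributions can be substituted into the single global parity check of $C$ (the one additional row in a parity check matrix of $C$ beyond the concatenation of local parity checks, coming from $s=m\ell-k=1$). This produces one extra linear equation in the $r_{j^*}+1$ unknown symbols of block $j^*$. Together with the $r_{j^*}$ local parity equations of $B_{j^*}$ this forms a square $(r_{j^*}+1)\times(r_{j^*}+1)$ system, which is uniquely solvable because the PMDS definition forces the code obtained by puncturing $r_j$ positions per block to be a $[m\ell,m\ell-1]$-MDS code. Solving this final system costs $O((r_{j^*}+1)^3)=O(\max_i r_i^3)$.

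Summing, the overall cost is $O(m\max_i\{r_i\}^3)$, as claimed. The main subtlety is not in the linear algebra but in verifying that the augmented system in the exceptional block is always invertible; this follows directly from the PMDS property applied to the specific puncturing pattern induced by the erasures, together with the hypothesis that $r$ is decodable, so no extra combinatorial argument is required beyond citing the definition.
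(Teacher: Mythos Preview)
Your proposal is correct and follows essentially the same approach as the paper: decode the $m-1$ non-exceptional blocks locally via their MDS block codes, then use the single global parity together with the $r_{j^*}$ local parities to solve a square system of size $r_{j^*}+1$ in the exceptional block; each step costs $O(\max_i r_i^3)$ by Gaussian elimination. The only cosmetic difference is that the paper first invokes the classification theorem (Theorem~\ref{th:classMRri}) to write down an explicit block form of the parity check matrix $H$, whereas you obtain the ``$\sum_i r_i$ local checks plus one global check'' structure of $H$ directly by dimension counting; both routes yield the same decomposition and the same complexity bound.
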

\begin{proof}
By Theorem \ref{th:classMRri} we know that $C$ has a generator matrix of the form \eqref{eq:Stand} and therefore a parity check matrix of the form
$$
 H= \left(\begin{array}{cccc|c} B_1^\perp & 0 & \dots & 0 & 0 \\  0 & B_2^\perp & \dots & 0 & 0 \\ 
\vdots & \vdots & \ddots & \vdots & \vdots \\ 0 & 0 & \dots & B_{m-1}^\perp & 0 \\\hline  X_1 & X_2 & \dots & X_{m-1}& A^\perp  \end{array}\right),
$$
where $B_i^\perp \in F_q^{r_i\times (r_i+\ell)}, A^\perp \in \F_q^{(r_m+1)\times(r_m+\ell)}$ denote parity check matrices of the codes generated by $B_i, A$, respectively, and
$$ X_i =  \left(\begin{array}{cccc} 0 & 0 & \dots & 0 \\ \vdots &\vdots &&\vdots\\ 0 & 0 & \dots & 0 \\ x_{i1}& x_{i2}&\dots&x_{i(r_i+\ell)} \end{array}\right)    \in \F_q^{(r_m+1)\times (r_i+\ell)} $$
is such that $M_i\; (A^\perp)^\top = - B_i \;X_i^\top$. Note that such $A,x_{i1},\dots,x_{i(r_i+\ell)}$ exist, since $M_i$ has rank $1$ and $B_i$ has full column space.

Each of the first $m-1$ blocks with at most $r_i$ erasures can be decoded by solving the linear system of equations arising from the matrix $B_i^\perp$\footnote{This system of equations has $r_i$ equations and at most $r_i$ variables corresponding to the erasures.}, which can be done with a complexity of order $O(r_i^3)$, using Gaussian elimination. Analogously, we can decode $r_m$ erasures in the last block with the first $r_m$ rows of $A^\perp$.

After correcting all blocks with at most $r_i$ erasures, we can decode the remaining block. 
If this block is one of the first $m-1$ blocks, we solve the system of equations arising from $B_i$ and the last row of $H$; if it is the last block we solve the system of equations arising from the last $r_m+1$ rows of $H$. These systems of equations have one variable and one equation more than in the previous case. Hence, the complexity order for solving this is still in $O(r_i^3)$. Since we have $m$ blocks, the statement follows.
\end{proof}

Note that in the proof of the previous theorem we simply used the parity check matrix for decoding erasures. As mentioned before one can also use other suitable erasure decoding algorithms in each block with at most $r_i$ erasures. This might be more efficient from a time or storage complexity, as well for the question how to store the code. In any case, the extra parity equation corresponding to the global parity needs to be stored and used for decoding the block with the extra erasure. 

%For example, if every block MDS code is a Reed-Solomon code, wen can use the Berlekamp-Massey algorithm for decoding any of the blocks with at most $r_i$ erasures, and Gaussian elimination for the block with the additional erasure, using the global parity equation. Since the Berlekamp-Massey algorithm for decoding $r_i$ erasures in the $i$-th block needs $O(r_i^2)$ many operations, we get an overall decoding complexity of order $O(m \max_i\{r_i\}^2 +  \max_i\{r_i\}^3)$.

For example, if every block MDS code is a Reed-Solomon code, we can use suitable algorithms for decoding any of the blocks with at most $r_i$ erasures, and Gaussian elimination for the block with the additional erasure, using the global parity equation. In that case we get an overall decoding complexity of order $O(m f_{RS}(\ell,\max_i\{r_i\})+\max_i\{r_i\}^3)$, where $f_{RS}(k,n-k)$ denotes the complexity of erasure decoding a received word in a $[n,k]$-Reed-Solomon code. Many algorithms have been developed for Reed-Solomon erasure decoding. E.g., the classical Berlekamp-Massey algorithm \cite{be68b,ma69} gives $ f_{RS}(\ell,\max_i\{r_i\})\in O(\max_i\{r_i\}^2)$; or, if the base field has characteristic $2$, the algorithm of \cite{li14} achieves  $ f_{RS}(\ell,\max_i\{r_i\})\in O(\ell+\max_i\{r_i\}\lg(\ell+\max_i\{r_i\}))$.

%%%%%%%%%%%%%%%%%%%%%%%%%%%%%%%%%%%%%%%%%%%%%%%%%%%%%%%%%%%%%%%%%%%%%%%%%%%%%%%%%%%%%%%%%%%%%%%%%%%%%%%%%%%%%%%%

\section{Conclusions}\label{sec:conclusions}

In this paper we generalized the definition of PMDS codes to allow locality blocks of various length. We showed that this definition still fulfills the MR (maximally recoverable) property for codes with separate locality groups. Moreover, we gave a generalized construction for PMDS code with one global parity ($s=1$) by giving a generator matrix in systematic form. Then we showed that basically all PMDS codes of this type must have a generator matrix of this form. Based on the correctness of the MDS conjecture, we derived a necessary and sufficient field size for these codes to exist. The main result states that PMDS codes with $s=1$ exist if and only if the field size is at least the length of the longest locality block, except for a few special cases. For the few special cases our generalized construction gives codes over smaller fields than any other known construction. In the end we gave a simple decoding algorithm and derived its complexity order. 

A natural idea how to extend this work is to generalize these results to larger values of $s$. However, the number of conditions on the blocks of the generator matrix becomes quite large quite quickly. Thus, the techniques of this paper cannot straight-forwardly be transferred to larger values of $s$. However, for general $s$, some considerations can be found in the appendix of this paper.

\bibliographystyle{plain}
\bibliography{PMDS_stuff}

%%%%%%%%%%%%%%%%%%%%%%%%%%%%%%%%%%%%%%%%%%%%%%%%%%%%%%%%%%%%%%%%%%%%%%%%%%%%%%%%%%%%%%%%%%%%%%%%%%%%%%%%%%%%%%%%

\section*{Appendix: Some Considerations for General~$s$}

The paper at hand characterizes all PMDS codes that can correct $r_i$ erasures locally in every block, plus $s=1$ additional erasures anywhere. A natural question that arises at this point is how to generalize the argument in order to get a characterization for a general $s>1$. Unfortunately the arguments for $s=1$ seem to be hard to be generalized. In fact, even a generalization of Lemma \ref{lem:Stand} to the case $s=2$ gives quite complicated conditions on the structure of a systematic generator matrix.

However, the main idea of Theorem \ref{th:classMRri}  can be generalized to any $s$. With this we get the following necessary conditions for the existence of $[n,k,\ell; r_1,\dots,r_m]$-PMDS codes for general $k$, respectively $s=m\ell-k$:

\begin{thm}\label{thm:almostfinal}
% Necessary conditions for the existence of an  $[n,k=\ell m-s, \ell; r_1,\dots,r_m]$-PMDS code over a field $\F_q$ that can correct up to $s$ additional erasures are that over the field $\F_q$ there exist a $[\ell+\max\{r_i\}+s,\ell, \max\{r_i\}+s+1]$-MDS code and a $[\ell m, \ell m-s, s+1]$-MDS code.

A $[n,k=m\ell -s, \ell; r_1,\dots,r_m]$-PMDS code over $\F_q$ exists only if there exist
\begin{itemize}
\item  a $[\ell+\max_i\{r_i\}+s,\ell]$-MDS code over $\F_q$ and
\item a $[m\ell , m\ell -s]$-MDS code over $\F_q$.
\end{itemize}
\end{thm}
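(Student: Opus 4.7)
The second MDS code, namely $[m\ell, m\ell - s]$ over $\F_q$, is immediate from the PMDS definition: puncturing $C$ at any choice of $r_i$ coordinates in each block $i$ produces a code of length $m\ell$ and dimension $k = m\ell - s$, which is MDS by the PMDS hypothesis.

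For the first MDS code, we may assume without loss of generality that $r_1 = \max_i r_i$ (relabelling blocks if necessary), and the strategy is to generalize the construction used in the proof of Theorem \ref{th:classMRri}: we will exhibit an $\ell \times (\ell + r_1 + s)$ matrix $\hat B_1 = (B_1 \mid M)$ over $\F_q$ generating a $[\ell + r_1 + s, \ell]$-MDS code. Here $B_1 \in \F_q^{\ell \times (\ell + r_1)}$ is a systematic generator of block 1's MDS code, and $M \in \F_q^{\ell \times s}$ is derived from the ``global'' part of a generator matrix of the PMDS code $C$. Concretely, after row-reducing a generator matrix $G$ of $C$ so that the first $\ell$ rows contain $B_1$ in the first $\ell + r_1$ columns (with zeros below $B_1$ in those columns), we take $M$ to consist of the top-$\ell$-row restriction of $s$ well-chosen columns of block $m$'s portion of $G$, in analogy with the choice of the $\alpha^{(i)}$ column for the $s = 1$ case in Theorem \ref{th:classMRri}.

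The MDS property of $\hat B_1$ is then established via the combinatorial characterization of PMDS: any $k$ columns of $G$ having at most $\ell$ columns per block are linearly independent in $\F_q^k$. I would apply this to a specific $k \times k$ submatrix of $G$ consisting of $\ell - s$ columns of block 1, $\ell$ columns of each of blocks $2, \ldots, m-1$, and $\ell$ columns of block $m$ chosen to contain the $s$ columns defining $M$. The invertibility of this $k \times k$ matrix, via block-column operations using the MDS property of each $B_i$ for $i = 2, \ldots, m-1$ to eliminate the block-$m$ interactions with those blocks, should reduce to the invertibility of an $\ell \times \ell$ submatrix of the form $(\bar B_1 \mid \bar M)$, where $\bar B_1$ is any $\ell \times (\ell - s)$ submatrix of $B_1$. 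Combined with the MDS property of $B_1$ itself (any $\ell$ columns of $B_1$ are independent), this yields that every set of $\ell$ columns of $\hat B_1$ is linearly independent, which is the MDS property.

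The hardest part will be the column-reduction step for $s > 1$: whereas in the $s = 1$ case handled by Theorem \ref{th:classMRri} the block-$m$ interaction matrices $M_i$ in the standard form have rank $1$ and are described by a single vector $\alpha^{(i)}$ (as in Lemma \ref{lem:Stand}), for $s > 1$ these matrices have rank up to $s$, and precisely tracking which columns survive the reduction requires more intricate bookkeeping. This is exactly the complication noted earlier as the obstruction to a clean generalization of Lemma \ref{lem:Stand} and to the full classification of Theorem \ref{th:classMRri} beyond $s=1$; for the weaker existence statement of Theorem \ref{thm:almostfinal}, however, it is enough to extract a single invertible $(\bar B_1 \mid \bar M)$, which the above argument is designed to provide.
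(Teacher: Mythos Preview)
Your approach mirrors the paper's: both extract the required $[\ell + r_1 + s, \ell]$-MDS code as $\hat B_1 = (B_1 \mid M)$ with $B_1$ a systematic generator of block~1 and $M$ consisting of $s$ columns taken from the last block, and both verify the MDS property by reducing suitable $k \times k$ minors of $G$ to $\ell \times \ell$ minors of $\hat B_1$. The paper simplifies by treating only $m = 2$ (declaring larger $m$ analogous), which sidesteps your loosely specified ``column-reduction through blocks $2, \ldots, m-1$''. If you want to keep general $m$, you should first put $G$ into block-diagonal form on blocks $1, \ldots, m-1$ (this is possible when $s \leq \ell$, an assumption both arguments make implicitly via the appearance of $I_{\ell-s}$ or of ``$\ell - s$ columns of block~1''); after that the reduction is a clean Laplace expansion rather than an ad hoc elimination on a matrix whose middle blocks are still entangled.

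There is one genuine gap. You verify that any $\ell - s$ columns of $B_1$ together with \emph{all} $s$ columns of $M$ are independent, and separately that any $\ell$ columns of $B_1$ alone are independent, and then assert that these two facts give the MDS property of $\hat B_1$. They do not: one must also check $\ell - j$ columns of $B_1$ together with $j$ columns of $M$ for every $1 \leq j \leq s - 1$, and these intermediate cases are not implied by the two endpoints. For instance, with $\ell = 3$, $s = 2$, $B_1 = I_3$, and $M$ having columns $e_1 + e_2$ and $e_1 + e_3$, both endpoint conditions hold, yet $\{e_1, e_2, e_1 + e_2\}$ is dependent. The fix is easy and is exactly what the paper sets up by writing ``for $0 \leq t \leq s$'' before focusing on $t = s$: for each $j$, take $\ell - j$ columns from block~1, the $\ell - s$ identity columns of block~$m$ together with the desired $j$ columns of $M$, and $\ell$ columns from each intermediate block; Lemma~\ref{lem:PMDScap} forces this $k \times k$ minor to be nonzero, and the same block reduction yields $\det(\bar B_1 \mid M^{(j)}) \neq 0$. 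Your final sentence, that ``it is enough to extract a single invertible $(\bar B_1 \mid \bar M)$'', is therefore misleading: you need invertibility for \emph{every} choice of $\bar B_1$ and every sub-selection of columns of $M$.
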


\begin{proof}
The second statement, i.e., the existence of an  $[m\ell , m\ell -s]$-MDS code, easily follows from the definition of PMDS, since after puncturing each block in $r_i$ ($1\leq i\leq m$) coordinates, the remaining code must be a $[m\ell , m\ell -s]$-MDS code.

For simplicity we prove the first statement for $m=2$. The proof for larger $m$ is analogous. 
Let $C$ be a $[n,2\ell -s, \ell; r_1,r_2]$-PMDS code over $\F_q$. Similarly to the proof of Lemma \ref{lem:Stand} the  generator matrix in systematic form of $C$ is
$$ G=\left(\begin{array}{cc|cc} I_\ell & B & 0 & M \\\hline  0 & 0& I_{\ell-s} & A   \end{array} \right) ,$$
where $B\in \F_q^{\ell\times r_1}, A  \in \F_q^{(\ell-s)\times (r_2+s)}, M  \in \F_q^{\ell\times (r_2+s)}$. By the definition of PMDS codes, any $\ell - t$ columns from the first block and $\ell - s + t$ columns from the second block (for $0\leq t \leq s$) form an MDS code. We now consider the case $t=s$. In particular, if $M_s$ denotes the first $s$ columns of $M$, and we consider the minors including the first $\ell$ columns of the second block, all maximal minors of $( I_\ell \mid  B \mid M_s)$ are non-zero. This implies that $( I_\ell \mid B \mid M_s)$ generates a $[\ell+r_1+s,\ell]$-MDS code.

By symmetry the same holds for the second block. For this note that another generator matrix of $C$ has the following form:
$$ G'=\left(\begin{array}{cc|cc} A' & I_{\ell-s} & 0 & 0 \\\hline  M' & 0& I_{\ell} & B'   \end{array} \right) ,$$
where $B'\in \F_q^{\ell\times r_2}, A'  \in \F_q^{(\ell-s)\times (r_1+s)}, M  \in \F_q^{\ell\times (r_1+s)}$. With the same argument as above we get that  $(M'_s \mid  I_\ell \mid B )$ generates a $[\ell+r_2+s,\ell]$-MDS code.

\end{proof}

\begin{rem}
We saw in Theorem \ref{th:classMRri} that the two conditions given in Theorem \ref{thm:almostfinal} are also sufficient when $s=1$. Moreover, in that case the second condition is trivially satisfied, since  a $[m\ell, m\ell-1]$-MDS code exists over any finite field $\F_q$.
\end{rem}

\begin{cor}\label{cor:23}
Let $\ell, m,s > 1$. 
Assuming that the MDS Conjecture (Conjecture \ref{MDSconj})  is correct, this implies that for $q<\max\{\ell+\max_i\{r_i\}+s,   m\ell \}-2 $ no $[n,k=m\ell -s, \ell; r_1,\dots,r_m]$-PMDS codes exist.

Furthermore,
\begin{enumerate}
\item if $n^*:= \ell+\max_i\{r_i\}+s \leq   m\ell$ and $n^* \neq 2^h+2$ or $\ell \not\in \{3,2^h-1\}$, then no such PMDS code exists for $q<n^* -1$;
\item  if $n^*:=    m\ell\leq \ell+\max_i\{r_i\}+s $ and $n^* \neq 2^h+2$ or $m\ell -s \not\in \{3,2^h-1\}$, then no such PMDS code exists for $q<n^* -1$.
\end{enumerate}
%, except if $n^*:=2^h+2$ and $\ell\in \{3,2^h-1\}$, in which case no such PMDS codes exist for  $q<n^*-2$.
\end{cor}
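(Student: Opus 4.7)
The plan is to invoke Theorem~\ref{thm:almostfinal} and then apply the MDS Conjecture to each of the two MDS codes whose existence it forces. No new construction is needed; the corollary is essentially a bookkeeping of what the MDS Conjecture says about those two codes.

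First I would note that any $[n, k = m\ell - s, \ell; r_1, \ldots, r_m]$-PMDS code over $\F_q$ requires, by Theorem~\ref{thm:almostfinal}, the existence over $\F_q$ of both a $[\ell + \max_i\{r_i\} + s, \ell]$-MDS code and a $[m\ell, m\ell - s]$-MDS code. Under the hypotheses $\ell, m, s > 1$, both dimensions lie strictly inside the range $(1, \text{length} - 1)$: for the first code, $\ell > 1$ and the redundancy $\max_i\{r_i\} + s \geq s + 1 \geq 3$; for the second code, $m\ell - s \geq \ell > 1$ and the redundancy $s > 1$. Hence Conjecture~\ref{MDSconj} applies to each of them.

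Next I would apply Conjecture~\ref{MDSconj} to each MDS code separately. From the first code it gives $\ell + \max_i\{r_i\} + s \leq q + 1$, i.e.\ $q \geq \ell + \max_i\{r_i\} + s - 1$, except possibly when $q = 2^h$ and $\ell \in \{3, 2^h - 1\}$, in which case we only conclude $q \geq \ell + \max_i\{r_i\} + s - 2$ (so that $\ell + \max_i\{r_i\} + s = 2^h + 2$). From the second code it gives $q \geq m\ell - 1$, with the analogous exception when $q = 2^h$ and $m\ell - s \in \{3, 2^h - 1\}$, in which case $q \geq m\ell - 2$ (so that $m\ell = 2^h + 2$). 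Taking the weakest of these conclusions (both exceptions active) yields the preliminary bound $q \geq \max\{\ell + \max_i\{r_i\} + s,\, m\ell\} - 2$.

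Finally, for parts~1 and~2 the stated hypotheses are designed precisely to rule out the exceptional branch of the MDS Conjecture for the specific MDS code whose length is $n^*$. In part~1 the hypothesis ``$n^* \neq 2^h + 2$ or $\ell \notin \{3, 2^h - 1\}$'' excludes the exception for the first MDS code, so the first bound sharpens to $q \geq n^* - 1$; in part~2 the hypothesis ``$n^* \neq 2^h + 2$ or $m\ell - s \notin \{3, 2^h - 1\}$'' excludes the exception for the second MDS code, so the second bound sharpens to $q \geq n^* - 1$.

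There is no substantive obstacle: the proof simply chains Theorem~\ref{thm:almostfinal} with Conjecture~\ref{MDSconj}. The only care needed is verifying that the dimensions of both auxiliary MDS codes fall inside the range where the MDS Conjecture is formulated, and matching the two forms of the conjecture's exceptional case (indexed by $\ell$ for the first code, by $m\ell - s$ for the second) to the two hypotheses in parts~1 and~2.
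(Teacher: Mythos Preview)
Your proposal is correct and follows exactly the intended (implicit) argument: the paper states Corollary~\ref{cor:23} without proof, as an immediate consequence of Theorem~\ref{thm:almostfinal} combined with Conjecture~\ref{MDSconj}, and your write-up fills in precisely those details, including the verification that the hypotheses $\ell,m,s>1$ place both auxiliary MDS codes in the nontrivial range where the conjecture applies.
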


For $\ell=1$ however, we can always construct $[n, m-s, 1; r_1,\dots,r_m]$-PMDS codes as the concatenation of some repetition codes with a $[m,n-s]$-MDS code. This gives the following result:

\begin{thm}\label{thm:final}
For any $m\geq 2$, $s,r_1,\dots,r_m\geq 1$ there exists a $[n, m-s, 1; r_1,\dots,r_m]$-PMDS code  over a field $\F_q$ with $q\geq m-1$. If $m = 2^h+2$ (for some $h\in \mathbb N$) and $m-s\in \{3,2^h-1\}$, then such a code exists over a field of size  $q\geq m-1$.
\end{thm}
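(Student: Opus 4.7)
The plan is to construct a $[n, m-s, 1; r_1, \ldots, r_m]$-PMDS code by concatenating repetition codes with an ordinary MDS code, as suggested in the sentence preceding the theorem. Concretely, let $C_0$ be an $[m, m-s]$-MDS code over $\F_q$, and define the encoding map $\F_q^{m-s} \to \F_q^n$ as follows: first encode an information vector into a codeword $(c_1, \ldots, c_m) \in C_0$, and then replace each coordinate $c_i$ by the length-$(r_i + 1)$ repetition $(c_i, \ldots, c_i)$, which will constitute the $i$-th block. The resulting linear code $C$ has length $n = \sum_{i=1}^m(r_i + 1) = m + \sum_i r_i$, dimension $m-s$, and the $i$-th block sits inside a locality group of size $r_i + \ell = r_i + 1$, matching the target parameters.

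Verifying the two PMDS conditions from the definition should be essentially immediate. For each $i$, the $i$-th block of $C$, viewed as a code in $\F_q^{r_i + 1}$, is the repetition code, which is a $[r_i + 1, 1]$-MDS code and thus provides the required locality with $\ell = 1$. Moreover, if we puncture each block in any $r_i$ coordinates, the surviving piece has exactly one coordinate per block, and the restriction of $C$ to those surviving coordinates is precisely $C_0$, which is an $[m, m-s]$-MDS code of the required length $m\ell = m$ and dimension $k = m-s$. So both defining clauses of a PMDS code hold.

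It remains to decide over which fields the auxiliary code $C_0$ can be realized. For the first part of the statement, any $q \geq m-1$ suffices: in the generic range $1 < m-s < m-1$ take an extended Reed-Solomon code of length $q+1 \geq m$ and, if necessary, puncture to length $m$; the boundary cases $m-s \in \{1, m-1\}$ reduce to the trivial repetition code or the single parity-check code, both of which exist over any field. For the special case $m = 2^h + 2$ with $m-s \in \{3, 2^h - 1\}$, which is exactly the exceptional regime of Conjecture \ref{MDSconj}, a doubly extended Reed-Solomon code supplies an $[m, m-s]$-MDS code over $\F_{2^h}$, so one may take $q = 2^h = m-2$, improving on the generic bound.

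I do not anticipate any real obstacle: the entire point is that when $\ell = 1$ the locality requirement is discharged for free by replication, and the PMDS property reduces cleanly to the ordinary MDS property of the outer code $C_0$. Once that observation is in hand, the proof is just a matter of recording the definition check and citing the standard existence results for (doubly) extended Reed-Solomon codes.
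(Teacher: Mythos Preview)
Your proposal is correct and follows essentially the same approach as the paper: both build the PMDS code by taking an $[m,m-s]$-MDS code $C_0$ and replacing each coordinate by a repetition block of length $r_i+1$, then verify the PMDS conditions and cite (doubly) extended Reed--Solomon codes for the required field sizes. Your description is phrased at the encoding-map level while the paper works at the generator-matrix level (replacing each entry $g_{i,j}$ by $g_{i,j}(1,\dots,1)$), but these are the same construction, and your verification of the two PMDS clauses is in fact more explicit than the paper's ``one can easily check''.
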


\begin{proof}
Let $G=(g_{i,j})_{i,j}$ be the generator matrix of a $[m,m-s]$-MDS code, which exists over $\F_q$ with $q\geq m-1$. Then replace all elements $g_{i,j}$ with the row vector $g_{i,j}(1,\dots,1)$ of length $r_j+1$. One can easily check that the resulting matrix is the generator matrix of a $[n, m-s, 1; r_1,\dots,r_m]$-PMDS code over $\F_q$.

In the case that $m = 2^h+2$ and $m-s\in \{3,2^h-1\}$ we can take as the outer $[m,m-s]$-MDS code a doubly extended Reed-Solomon code (or its dual), which exists over $\F_q$ with $q\geq m-2$. Then the same construction as above will result in a generator matrix of a PMDS code of the desired parameters.
\end{proof}

Note that in the setting of Theorem \ref{thm:final} the first necessary conditions of Theorem \ref{thm:almostfinal} becomes trivial, as it requires the existence of a one dimensional MDS code. The second condition is equivalent to the one of Theorem \ref{thm:final}, and hence also a sufficient condition for the existence of such a PMDS code.

We conclude this appendix with a few words on the easiest non-trivial case, i.e., when $s=2$. Even in this case, as we will see in the next example, it is very difficult to deduce a general construction and a characterization like the one we gave in this paper for $s=1$.

\begin{ex}
 Let $n=8, k=4$. Then the following is a generator matrix for a $[8,4,3; 1,1]$-PMDS code over $\F_7$:
$$\left( \begin{array}{cccc|cccc}
    1&0&0&1  &0&1&2&2\\
    0&1&0&4  &0&1&3&6\\
    0&0&1&6  &0&1&4&3\\
    0&0&0&0  &1&1&5&1
   \end{array}\right)
 $$
One can easily see that any combination of $3$ columns from each block together generates a $[6,4]$-MDS code over $\F_7$.

On the other hand, there is no possible completion of the matrix 
$$\left( \begin{array}{cccc|cccc}
    1&0&0&*  &0&1&1&*\\
    0&1&0&*  &0&1&2&*\\
    0&0&1&*  &0&1&3&*\\
    0&0&0&0  &1&1&4&*
   \end{array}\right)
 $$
such that it generates a $[8,4,3; 1,1]$-PMDS code over $\F_7$, although all properties stated in Theorem \ref{thm:almostfinal} are fulfilled. In particular, the respective submatrices as in the proof of Theorem \ref{thm:almostfinal} generate MDS codes, e.g., the first $3$ columns from both blocks together constitute the generator matrix of a $[6,4]$-MDS code, and the submatrix indexed by columns $1,2,3,4,6,7$ and rows $1,2,3$ can be completed to generate a $[6,3]$-MDS code.
\end{ex}

\end{document}